\documentclass[12pt]{article}
\usepackage{authblk}

\usepackage{amsmath}
\usepackage{amsfonts}
\usepackage{amsthm}
 \usepackage{hyperref}
\usepackage{bbm}
\usepackage{empheq}
\usepackage{scrextend}
\usepackage{mathtools}
\usepackage{amssymb}
\usepackage{graphicx}
\graphicspath{ {./images/} }
\setlength{\parindent}{0pt}

\newcommand*{\p}{\mathbb{P}}
\usepackage{xcolor}
\usepackage{dsfont}
\usepackage{natbib}
\usepackage{graphicx}
\usepackage{cases}
\graphicspath{ {./images/} }
\usepackage{float}
\usepackage{multibib}
\usepackage{algorithm} 
\usepackage{setspace}
\usepackage{booktabs}
\usepackage[utf8]{inputenc}

\setlength{\parindent}{0pt}
\usepackage{accents}

\newcommand{\norm}[1]{\left\lVert#1\right\rVert}

\oddsidemargin 0in 
\evensidemargin 0in 
\topmargin -0.5in 
\textheight 9in 
\textwidth 6.5in 


\parindent 0cm 
\parskip 0.2em 

\newtheorem{Lemma}{Lemma}[section]

\newtheorem{theorem}[Lemma]{Theorem}

\newtheorem{remark}[Lemma]{Remark}
\newtheorem{assumption}[Lemma]{Assumption}

\theoremstyle{definition}

\date{}

\definecolor{darkblue}{rgb}{.1, 0.1,.8}
\definecolor{darkgreen}{rgb}{0,0.8,0.2}
\definecolor{darkred}{rgb}{.8, .1,.1}

\mathtoolsset{showonlyrefs}
\newcommand*{\E}{\mathbb{E}}

\newcommand*{\N}{\mathbb{N}}
\newcommand*{\Z}{\mathbb{Z}}
\newcommand*{\R}{\mathbb{R}}
\renewcommand{\P }{{\mathbb P}}
\newcommand{\1}{\mathbbm{1}}

\title{Multiscale detection of practically significant changes in a gradually varying time series}

\begin{document}

\author{
  { Patrick Bastian, ~Holger Dette} \\
{ Ruhr-Universit\"at Bochum} \\
{ Fakult\"at f\"ur Mathematik} \\
{ 44780 Bochum, Germany} 
 }

\maketitle
\begin{abstract}
In many change point problems it is reasonable to assume that compared to a benchmark at a given time point $t_0$ the properties of the observed stochastic process   change gradually over time for $t >t_0$.
Often, these gradual changes are not of interest as long as they are small (nonrelevant), but one  
is interested  in the question if 
the deviations are practically significant in the sense that the deviation of the process compared to  the time $t_0$ (measured by an appropriate metric) exceeds a given threshold, which is of practical significance (relevant change). 

In this paper we develop novel and powerful change point analysis  for detecting such deviations  in a sequence of gradually varying means, which is compared with the  average mean  from a previous time period.    Current approaches to this problem suffer from low power, rely on the  selection of  smoothing parameters  and  require a rather regular (smooth) development for the means.  We develop a multiscale procedure that alleviates all these issues, validate it theoretically and demonstrate its good finite sample performance on both synthetic and real data.  
\end{abstract}

\section{Introduction}

   \def\theequation{1.\arabic{equation}}	
   \setcounter{equation}{0}

Change point analysis is an ubiquitous topic in mathematical statistics with numerous applications in diverse areas such as economics, climatology and linguistics.    In this paper we are concerned with the detection of (gradual) changes in  the  univariate time series 
\begin{align}\label{hd1}
    X_i=\mu(i/n)+\epsilon_i~, ~~~i=1, \ldots , n , 
\end{align}
where $(\epsilon_i)_{1 \leq i \leq n}$ is a centered error process and $\mu:[0,1]\rightarrow \R$ is an unknown  (mean) function. 
Starting with the seminal work of  
\citep*{Page1955}, a  large part of the literature considers the case where the function 
$\mu$  is piecewise constant with at most one change and we refer to  reviews of \citep*{Aue2013},   \citep*{Aue2023} and to the recent textbook of \citep*{Horvath24} and the reference therein.
More recently  the problem of detecting  multiple changes has found  considerable interest as well. Among many others, we  mention \citep*{Fryzlewicz2013}, \citep*{Frick2014},  \citep*{baranowskietal2019}, \citep*{Eckle2020}, who considered  one-dimensional data and to \citep*{CWS22,LWR23,Padilla_Yu_Wang_Rinaldo_multivariate_nonparametric}
for some recent results in the high-dimensional and multivariate case. A good  review  of the current state of the art on data segmentation of a piecewise constant signal can be found in the recent papers of  \citep*{TRUONG2020} and \citep*{cho:kirch:2024}. A common  aspect of the methodology  in  most of  these references consists in the fact that it is based on 
the construction of testing procedures for the hypothesis
\begin{align} \label{hd2}
    H_0:\mu(t)=c \quad \forall t\in [0,1],
\end{align}
where $c$ is an unknown constant, and the different procedures address different forms of the piecewise constant function $\mu$ under the alternative. Here a large part of the literature has its focus on piecewise stationary alternatives.

While there are many applications where this  can be well  justified,  at least approximately 
 \citep[see  for instance][for some examples]{Aston2012,Hotz2013},
 there exist also many situations  where it  is not  reasonable to assume that the mean function $\mu$ in model \eqref{hd1} is  piecewise constant 
 over the full period because it is continuously   smoothly between potential jumps points. 
   Examples include climate data \citep*{Karl1995}, where it is continuously varying with no jumps, financial data \citep*{Vogt2015} and medical data \citep*{Gao2018}. 
  In this context,
  \citep*{Muller1992}, \citep*{Gijbels2007} and \citep*{Gao2008}  among others, considered model \eqref{hd1} with a gradually  varying mean function and  sudden jumps
 and developed change point analysis for determining 
the locations of these jumps. 
 \citep*{Vogt2015} considered the problem of detecting gradual changes in a more general context. For the special case of the  mean they
assumed that $\mu $  is  constant for some time, then
slowly starts to change (with no jumps), and  
developed a fully
nonparametric method to estimate such a smooth change point.

The present paper differs from these works.  Although we consider a model of the form \eqref{hd1} with  
smooth regression function and potential jumps, we are not interested in the locations of the jump points 
or in the first time point where the mean functions starts to vary. Instead  we define a  change  differently as a practical significant deviation of $\mu$ in the interval $[0,1]$ from  a given  benchmark, say $\mu_0^{t_0}$. More precisely, for a constant $\mu_0^{t_0}$  and a threshold $\Delta >0 $ we are interested in the hypotheses
\begin{align}
\label{p9}
    H_0(\Delta): \sup_{t \in [t_0,1]}|\mu(t)-\mu_0^{t_0}|\leq \Delta \ \ vs \ \ H_1(\Delta):\sup_{t \in [t_0,1]}|\mu(t)-\mu_0^{t_0}|> \Delta , 
\end{align}
where $t_0\in [0,1)$ is a given point defining the time interval of interest.
A prominent example for the consideration of these hypotheses  is the analysis of global mean temperature anomalies, where one is interested in a significant 
deviation of the current temperatures from a reference value $\mu_0^{t_0}$ at time $t_0$ (such as the average temperature before time $t_0$), and  an important problem is to  investigate if these deviations exceed 
$\Delta$ after the time $t_0$ , such as  $1.5 $ degrees
Celsius as postulated in the Paris agreement. 
Hypotheses of the form \eqref{p9} are also considered in  quality control (often in an online framework), where one is interested in the ``stability'' of  a given process. This means that the sequence of means stays within a predefined range
(as specified by the null hypothesis in \eqref{p9}). 
While in change-point analysis the focus is often on testing
for the presence of a change and on estimating the time at which a change occurs
once it has been detected, quality control  typically has its focus more on
detecting such a change as quickly as possible after it occurs \citep[see, for example,][]{woodmont1999}.
  
Despite their importance, a  test for the hypotheses in  \eqref{p9} has only been recently developed by  \citep*{buecher21}, who used 
local linear regression techniques  to  estimate  the quantity  $\sup_{t \in [t_0,1]}|\mu(t)-\mu_0^{t_0} |$. They proposed to reject the null hypothesis for large values of the estimate, where critical values are either obtained by asymptotic theory (which shows that a  properly scaled version of the estimate converges weakly to a  Gumbel type distribution) or by resampling based on a Gaussian approximation.
As a consequence, the resulting procedure suffers from several deficits
First, it is conservative in finite samples, particularly for small sample sizes. Second, the mean function $\mu$ in model \eqref{hd1} has to be twice differentiable and the difference $\mu(t)-\mu_0^{t_0}$ has to satisfy some convexity properties, making the method unreliable for less smooth or discontinuous functions.  Finally, the procedure proposed in \citep*{buecher21}  relies on a bandwidth parameter that also prevents detection of local alternatives at the standard parametric rate $n^{-1/2}$.

\textbf{Our contribution} in this paper  is a novel multiscale test for the hypotheses \eqref{p9} that does not suffer from the aforementioned problems. More precisely, we compare (an estimate of)  the  benchmark with local means  calculated over different scales and 
establish the weak convergence of this statistic (see Theorem \ref{Thm:Nulldist}). The limit is not  
distribution free and  depends on sequences of ``extremal sets'',
which can only be defined  implicitly due to certain properties of the Brownian motion. 
Based on this result, we  propose  first a testing procedure that relies on a distributional upper bound of the limiting distribution 
and only requires estimation of a long-run variance. By construction, the resulting test is conservative, but it
it  can detect local alternatives at a parametric rate and already 
 outperforms the existing methodology 
 in our finite sample study. By estimating the extremal sets we are able to  construct a more elaborate procedure which achieves the nominal level asymptotically.  The resulting test can 
detect local alternatives at a parametric rate as well and yields a further substantial improvement  of the finite sample performance. 

Summarizing,  both novel  multiscale tests for practically relevant changes in the gradually changing mean function can detect 
local alternatives converging to the null at a faster rate than the currently available procedure. In contrast to this test, they 
are  applicable for piecewise smooth mean functions without further constraints on their geometry and do not require the choice of smoothing parameters.

\section{Multiscale detection of relevant changes}
\label{sec}

   \def\theequation{2.\arabic{equation}}	
   \setcounter{equation}{0}

In the following we consider the location scale model
\begin{align}
\label{model}
    X_{i}=\mu(i/n)+\epsilon_i \quad i=1,...,n
\end{align}
where $(\epsilon_i)_{i \in \N}$ is a stationary centered process and $\mu$ is a bounded and piecewise Lipschitz-continuous function. We are interested in detecting significant deviations of the function $\mu$ on the interval $[t_0,1]$
from its long-term average in the past
\begin{align} \label{hd6}
\mu_0^{t_0}:={1 \over t_0} \int_0^{t_0}\mu(x)dx~,
\end{align}
where $0<t_0<1$ is some predefined point in time. We address this problem by testing the hypotheses 
\begin{align}
\label{hypotheses}
H_0(\Delta)\colon d_\infty \leq \Delta \quad vs \quad H_1(\Delta)\colon d_\infty>\Delta~, 
\end{align}
where 
\begin{align} \label{hd3}
    d_\infty=\sup_{t \in [t_0,1]}|\mu(t)-\mu_0^{t_0}|
\end{align}
denotes the maximum (absolute) deviation of the function $\mu$ from  $\mu_0^{t_0}$ over the interval $[t_0,1]$ and $\Delta>0$ is a given threshold.

 A typical application for this benchmark  is encountered in the analysis of temperature data where deviations from a pre-industrial average temperature are of interest. In this case the 
 choice of threshold is also quite clear. For example, if we are interested in the 
global mean temperature anomalies,
a reasonable choice is 
 $\Delta=1.5$ degrees Celsius corresponding to the Paris Agreement adopted at the UN Climate Change Conference(COP21) in Paris, 2015. In other circumstances the choice of 
  $\Delta$ might not be so obvious and has to be carefully discussed for each application.  We defer further discussion of this issue to Remark \ref{DeltaChoice}, where we also propose a data based choice of the threshold $\Delta$.

In the remainder of this section we introduce the necessary concepts and assumptions to define and theoretically analyze a multiscale test statistic for the testing  problem \eqref{hypotheses}.

\begin{assumption}
The random variables
in model \eqref{model}
form a triangular array of real valued random variables where $(\epsilon_i)_{i \in \Z}$ is a mean zero stationary sequence with existing  long run variance $\sigma^2  = \sum_{i \in \Z}\E[\epsilon_0\epsilon_i]$
    \begin{enumerate}
        \item[ (A1)] For  some $0<p<1/2$ there exists a standard Brownian motion $B$, such that for all $k \in \N$ \begin{align}
            \Big| \sum_{i=1}^k\epsilon_i - \sigma B(k) \Big|\leq Ck^{1/2-p} 
        \end{align}    
        almost surely for some constant $C>0$.        
        \item[(A2)] The function $\mu:[0,1]\rightarrow \R$ is piecewise Lipschitz continuous with finitely many jumps.        
    \end{enumerate}
\end{assumption}

Assumption  (A1) is a high level assumption that is standard in the literature and is satisfied by a large class of weakly dependent time series (see, for instance \citep*{Dehling1983} for mixing, \citep*{Wu2005} for physically dependent and \citep*{Berkes2011} for $L^p$-$m$-approximable processes). Assumption (A2) is a weak regularity assumption on the function $\mu$, that can in principle be weakened to Hölder continuity with some additional technical effort. 

We now introduce the test statistic, and to that end denote for $j<k$ by
\begin{align}
    \hat \mu_j^k=\frac{1}{k-j}\sum_{i=j+1}^kX_i
\end{align}
the (local)  mean of the observations $X_{j+1}, \ldots , X_k$. Note that 
\begin{align}
 \mathbb{E} \big [   \hat \mu_j^k  \big ]  =   \frac{1}{k-j}\sum_{i=j+1}^k\mu(i/n)\simeq \frac{n}{k-j}\int_{j/n}^{k/n}\mu(t)dt~ 
\end{align} 
and therefore  $\hat \mu_0^{\lfloor nt_0 \rfloor } - \hat \mu_j^k  $
(approximately) compares the integral of the function $\mu$ over the interval $[0, \lfloor nt_0 \rfloor /n ]$
with the ``local'' integral over the interval $[j/n, k/n]$, which is approximately given by  $\frac{n}{k-j}\int_{j/n}^{k/n}\mu(t)dt \approx \mu(\frac{k+j}{2n})$ if $k-j$ is small. We now consider these differences on different scales and 
 define for a sequence $(c_n)_{n \in \N}$ of natural numbers  such that $c_n \rightarrow \infty$ and
\begin{align}
\label{p1}
    \frac{n^{1-2p}}{c_n}=o(1),
\end{align}
the test statistic
\begin{align}
\label{teststat}
    \hat T_{n,\Delta}=\sup_{\substack{c_n \leq c \\ c \leq n-\lfloor nt_0 \rfloor \\ c \in \N }}\sup_{\substack{|k-j|=c \\ k>j\geq \lfloor nt_0 \rfloor}}\Big( \sqrt{c} \big|\hat \mu_0^{\lfloor nt_0 \rfloor}-  \hat \mu_j^k  \big|-\sqrt{2\log\Big(\frac{ne}{c}\Big)}-\sqrt{c}\Delta\Big).
\end{align}
 By the discussion of the previous paragraph  $ \hat \mu_j^k$ estimates 
$\mu(\frac{k+j}{2n})$ for  smaller scales, which ensures that relatively short excursions of the function $ t \to  |\mu (t) -\mu_0^{t_0}|$ above $\Delta$ are detected.   For larger scales the statistic \eqref{teststat} is able to take advantage of longer excursions of the function $ t \to  |\mu(t)-\mu_0^{t_0}|$ above the threshold $\Delta$, thereby increasing the power of the test substantially.  The additive factor
\begin{align}   \Gamma_n(c):=\sqrt{2\log\Big(\frac{ne}{c}\Big)}
\end{align} 
equalizes the magnitude of the different scales which would otherwise be dominated by the small scales. We also note that the scaling factor $\sqrt{c}$ depends on $n$  and is of larger and smaller  order than $n^{1/2-p}  $ and $n^{1/2}$, which leads to a non-trivial  asymptotic distribution of the statistic $\hat T_{n,\Delta}$ in the case $d_\infty=\Delta$, which we call \textit{boundary} of the hypotheses. Our first main result provides such a weak convergence result for the statistic

\begin{align}
    \label{hd4}
          \hat T_n =\sup_{\substack{c_n \leq c \leq n-\lfloor nt_0 \rfloor \\ c \in \N }}\sup_{\substack{|k-j|=c \\ k>j\geq \lfloor nt_0 \rfloor}}\Big( \sqrt{c} \big|\hat \mu_0^{\lfloor nt_0 \rfloor}-\hat \mu_j^k  \big|-\Gamma_n(c)-d_{\infty}  \Big) , 
\end{align}
which reduces to the statistic $ \hat T_{n,\Delta} $ in \eqref{teststat}, if the centering  term 
$d_\infty$ defined in \eqref{hd3} is replaced by the threshold $\Delta$.

\begin{theorem}
\label{Thm:Nulldist}
    Grant assumptions  (A1) and (A2).   We then have 
 \begin{align}
     \label{hd11}
       \hat T_n
       \overset{d}{\longrightarrow} T_{d_\infty}
 \end{align}
     where    
     \begin{align}   
      \label{p2}
      T_{d_\infty}  :=\sigma \lim_{\epsilon \downarrow 0}\sup_{(s,t) \in \mathcal{A}_{\epsilon,d_\infty}} \Big \{ \mathfrak{s}(s,t)\Big(\sqrt{t-s}\frac{B(t_0)}{t_0}-\frac{ B(t)- B(s)}{\sqrt{t-s}}\Big)- \Gamma(t-s) \Big \} , 
    \end{align}   
     $B$ denotes a standard Brownian motion 
    and 
    \begin{align}
    \label{det1}
        \mathcal{A}_{\epsilon,d_\infty}&=\Big\{ (s,t) \in [t_0,1]^2 ~\Big | ~s<t, ~\big|\mu_0^{t_0}-\mu_s^t\big|\geq d_\infty-\epsilon\Big\}\\
        \mu_s^t&=\frac{1}{t-s}\int_s^t\mu(x)dx\\ \mathfrak{s}(s,t)&  =\text{sgn}(\big(\mu_0^{t_0}-\mu_s^t\big)\\
        \Gamma(t-s)&=\sqrt{2\log\Big(\frac{e}{t-s}\Big)}~. 
        \label{det3}
    \end{align}
    Moreover, the distribution of the random variable $T_{d_\infty}$ is continuous. In particular,
  \begin{itemize}
      \item[(1)] if  $d_\infty=\Delta$, we have
        $\hat T_{n,\Delta}\overset{d}{\longrightarrow} T_{\Delta},$
      \item[(2)]
         If $d_\infty<\Delta$ or $  d_\infty>\Delta$ we have 
         $\hat T_{n,\Delta}\overset{\p}{\longrightarrow}   -\infty$ or $\hat T_{n,\Delta}\overset{\p}{\longrightarrow}  \infty$, respectively. 
       \end{itemize}     
\end{theorem}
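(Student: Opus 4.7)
The plan is a three-step reduction: a strong-approximation step replacing the error partial sums by a Brownian motion, a localization step restricting the supremum to the extremal neighbourhood $\mathcal{A}_{\epsilon,d_\infty}$, and a continuous-mapping step followed by the passage $\epsilon\downarrow 0$.

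First I would split each summand of $\hat T_n$ as
\begin{align}
\sqrt{c}\bigl(\hat\mu_0^{\lfloor nt_0\rfloor}-\hat\mu_j^k\bigr)=\sqrt{c}\bigl(\mu_0^{t_0}-\mu_s^t\bigr)+\sqrt{c}\Bigl(\tfrac{1}{\lfloor nt_0\rfloor}\sum_{i=1}^{\lfloor nt_0\rfloor}\epsilon_i-\tfrac{1}{k-j}\sum_{i=j+1}^k\epsilon_i\Bigr)+R_n(j,k),
\end{align}
with $s=j/n$, $t=k/n$; here $R_n$ collects Riemann-sum and rounding errors and is uniformly $o(1)$ by (A2), since the finitely many jumps affect only $O(1)$ intervals. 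The error term is handled via (A1): the bound $|\sum_{i\leq m}\epsilon_i-\sigma B(m)|\leq Cm^{1/2-p}$ together with division by $\sqrt{c}$ yields a replacement error $O(n^{1/2-p}/\sqrt{c_n})=o(1)$, uniform in $(c,j,k)$ by assumption \eqref{p1}. Brownian rescaling $\tilde B(u):=B(nu)/\sqrt n$ (again a standard Brownian motion) then identifies the stochastic part with $\sigma\bigl(\sqrt{t-s}\,\tilde B(t_0)/t_0-(\tilde B(t)-\tilde B(s))/\sqrt{t-s}\bigr)+o_p(1)$, uniformly in the index set.

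Second, I would localize the supremum to $\mathcal{A}_{\epsilon,d_\infty}$. On its complement $|\mu_0^{t_0}-\mu_s^t|\leq d_\infty-\epsilon$, so the deterministic centring $\sqrt{c}(|\mu_0^{t_0}-\mu_s^t|-d_\infty)\leq -\sqrt{c_n}\epsilon\to-\infty$ dominates the $O_p(1)$ stochastic residual that remains after subtracting $\Gamma_n(c)$, and the restricted supremum tends to $-\infty$ in probability. On $\mathcal{A}_{\epsilon,d_\infty}$ itself the sign $\mathfrak{s}(s,t)$ is pinned for all large $n$ because $|\mu_0^{t_0}-\mu_s^t|\geq d_\infty-\epsilon$ dominates the stochastic perturbation of order $O_p(1/\sqrt{c_n})$, so the absolute value may be replaced by $\mathfrak{s}(s,t)$ times its signed argument. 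After subtracting $\sqrt{c}d_\infty+\Gamma_n(c)$, a continuous-mapping argument combined with a L\'evy-type modulus-of-continuity estimate for Brownian motion (to bridge the lattice supremum over $\{(j/n,k/n)\}$ and its continuous counterpart) yields, for each fixed $\epsilon>0$, weak convergence to $\sigma\sup_{(s,t)\in\mathcal{A}_{\epsilon,d_\infty}}\{\mathfrak{s}(s,t)(\sqrt{t-s}\,B(t_0)/t_0-(B(t)-B(s))/\sqrt{t-s})-\Gamma(t-s)\}$. This limit is monotone non-decreasing in $\epsilon$ and almost surely finite (the $-\Gamma(t-s)$ term prevents blow-up as $t-s\downarrow 0$), so the $\epsilon\downarrow 0$ limit exists and defines $T_{d_\infty}/\sigma$; continuity of the law of $T_{d_\infty}$ follows from Tsirelson's theorem on absolute continuity of suprema of non-degenerate Gaussian processes.

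Assertions (1) and (2) are then corollaries: (1) is the case $\Delta=d_\infty$, while for (2) the universal bound $\sqrt{c}|\hat\mu_0^{\lfloor nt_0\rfloor}-\hat\mu_j^k|\leq \sqrt{c}d_\infty+\Gamma_n(c)+O_p(1)$ gives $\hat T_{n,\Delta}\leq \sqrt{c}(d_\infty-\Delta)+O_p(\sqrt{\log n})\to-\infty$ if $d_\infty<\Delta$, while evaluating the supremum at an extremal pair $(s^*,t^*)$ with $|\mu_0^{t_0}-\mu_{s^*}^{t^*}|=d_\infty$ produces $\hat T_{n,\Delta}\geq \sqrt{c}(d_\infty-\Delta)+O_p(1)\to+\infty$ if $d_\infty>\Delta$. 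The main obstacle is carrying the $\epsilon\downarrow 0$ passage out uniformly in $n$: the extremal set $\mathcal{A}_{0,d_\infty}$ can have arbitrary geometry (isolated points, arcs, or entire open regions on which $\mu$ is flat) and continuous mapping at $\epsilon=0$ is not directly available, so one must produce tightness estimates for the Brownian functional near $\partial\mathcal{A}_{\epsilon,d_\infty}$ whose quality does not degenerate as $\epsilon\downarrow 0$, which is what legitimizes swapping the limits $n\to\infty$ and $\epsilon\downarrow 0$.
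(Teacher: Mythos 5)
Your three-step plan (strong approximation, localization, pass $\epsilon\downarrow 0$) is the same skeleton as the paper's, and Step~1 is essentially identical to what the paper does via Lemma~\ref{intapprox} and assumption (A1). The difficulties lie in Step~2 and in the $\epsilon\downarrow 0$ passage, and there you have a genuine gap rather than a different route.

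The precise statement ``for each fixed $\epsilon>0$, weak convergence to $\sigma\sup_{(s,t)\in\mathcal{A}_{\epsilon,d_\infty}}\{\ldots\}$'' cannot be right: the sequence $\hat T_n$ has a single distributional limit, which cannot equal a different law for every choice of $\epsilon$. The reason it fails is that on $\mathcal{A}_{\epsilon,d_\infty}$ the centering $\sqrt{c}\big(|\mu_0^{t_0}-\mu_s^t|-d_\infty\big)$ is not negligible — it ranges between $0$ and $-\sqrt{c}\,\epsilon$, and for fixed $\epsilon$ the latter diverges. So restricting the index set to $\mathcal{A}_{\epsilon,d_\infty}$ and then \emph{dropping} this nonpositive term produces only an asymptotic upper bound, not a limit. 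You flag the matching lower bound as ``the main obstacle'' at the end but never actually produce it; that obstacle is exactly where the paper's argument does the real work. The paper's resolution is a sandwich: for the upper bound (Lemmas \ref{Lemma:Upper}--\ref{Lemma:epslimit}) it localizes to the data-scale sets $\mathcal{E}_c$ with threshold $d_\infty - a\log(n)/\sqrt{c}$, which for \emph{every} fixed $\epsilon>0$ are eventually contained in $\mathcal{A}_{\epsilon,d_\infty}$; for the lower bound (Lemma~\ref{Lemma:Lower}) it restricts to $\mathcal{A}_{b_n,d_\infty}$ with $b_n = o(n^{-1/2})$, so the centering term there is $o(1)$ uniformly and can be absorbed, and a modulus-of-continuity/gridding argument (in the style of D\"umbgen--Spokoiny) removes the lattice restriction. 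Monotonicity in $\epsilon$ then squeezes both sides onto $\lim_{\epsilon\downarrow 0}\sup_{\mathcal{A}_{\epsilon,d_\infty}}$. Your plan to ``produce tightness estimates near $\partial\mathcal{A}_{\epsilon,d_\infty}$ that do not degenerate'' is a harder, unnecessary route; the shrinking-$b_n$ lower bound sidesteps it entirely.

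Two smaller points. For the continuity of the law of $T_{d_\infty}$, invoking Tsirelson-type absolute continuity for suprema of Gaussian processes is not immediately applicable because $T_{d_\infty}$ is not visibly a supremum over a fixed index set but a decreasing limit of such suprema; the paper instead observes it is a limit of convex Gaussian functionals, hence convex, and applies a general result on convex functionals of Gaussian measures (Bogachev, Theorem~4.4.1). For part (2) with $d_\infty>\Delta$ you ``evaluate at an extremal pair with $|\mu_0^{t_0}-\mu_{s^*}^{t^*}|=d_\infty$,'' but $d_\infty$ is a supremum of pointwise deviations and no averaged pair need attain it; the paper instead uses piecewise Lipschitz continuity to find an interval of length $\asymp n$ on which $|\mu(i/n)-\mu_0^{t_0}|\ge\Delta+\rho$, which makes the divergence rigorous. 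These are fixable, but the missing lower bound in the localization step is a real gap in the argument as written.
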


Note that   the limit distribution in Theorem \ref{Thm:Nulldist} is not distribution free even if the long run variance of the error process $(\epsilon_i)_{i \in \N}$ would be known. In fact this distribution   depends in a rather delicate way on the 
 regression function $\mu$ which appears in the definition $T_{d_\infty} $  
through the set ${\cal A}_{\varepsilon,{d_\infty}}$. 
This is contrast to many other  multiscale tests proposed in the literature  
\citep[see, for example,][] {Duembgen2001, Duembgen2008, Schmidthieber2013,Eckle2020}. The difference  can be explained by the fact that  these and - to the best of our knowledge - all other papers on multiscale testing  do not consider relevant hypotheses of the form \eqref{p9} and \eqref{hd3} with $\Delta >0$. In fact, transferring  the hypotheses considered in the multiscale testing literature so far to the situation considered in this paper yields the testing problem for 
the  ``classical''  hypotheses 
\begin{align*}
H_0:d_\infty = 0 \quad vs \quad H_1:d_\infty >0 ~,
\end{align*}
which  corresponds to the choice  $\Delta =0$ in \eqref{hypotheses}. 
It follows from the arguments   given in the proof  of Theorem \ref{Thm:Nulldist}  that in the case $d_\infty =0$
$$
 \hat T_n 
       \overset{d}{\longrightarrow} \sigma M~,
       $$
where the random variable $M$ is defined by
\begin{align}
\label{p4}
    M= \ \sup_{t_0\leq s < t \leq 1}\Big|\sqrt{t-s}\frac{B(t_0)}{t_0}-\frac{B(t)-B(s)}{\sqrt{t-s}}\Big|-\Gamma(t-s).
\end{align}

As the quantiles of the distribution of $M$ can be obtained by simulation, we can already  use this result for the construction of a  valid (conservative) inference procedure for the hypotheses \eqref{hypotheses}
employing the  upper (distributional) bound  
\begin{align}
    \label{dette3} 
  \mathbb{P} (T_\Delta > q)  \leq \mathbb{P} ( \sigma M > q) ~
\end{align}
and estimating the long run variance $\sigma^2$. 
 To be specific,  following \citep*{Wu2007} we  define
\begin{align} \label{det4}
    \hat \sigma^2 = \frac{1}{\lfloor n/m\rfloor -1}\sum_{j=1}^{\lfloor n/m \rfloor -1}\frac{\big( \hat \mu_{(j-1)m}^{jm}-\hat \mu_{jm}^{(j+1)m}\big)^2}{2m}
\end{align}
as an estimator of the long run variance, 
where the parameter $m  \in \N $ converges to $\infty$ as $n\to \infty$  and is  is proportional to $n^{1/3}$.  The null hypothesis is then rejected, whenever
 \begin{align}
    \label{test1}
       T_{n,\Delta} \geq \hat \sigma q_{1-\alpha}~,
    \end{align}
where  $q_{1-\alpha}$ denotes the $(1-\alpha)$ quantile of the distribution of $M$. We will show in the  Section \ref{sec5.4} of the appendix that (under the assumptions made in this paper) 
 the estimator 
$\hat \sigma^2$ is  consistent for  $\sigma^2$, that is, 
\begin{align}
    \hat \sigma^2=\sigma^2+O_\p(n^{-1/3})~,
\end{align}
which yields the following result.
\begin{theorem}
 Under assumptions  (A1) and (A2) the test defined by   \eqref{test1} is  consistent and has  asymptotic level $\alpha$.
\end{theorem}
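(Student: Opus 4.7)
The plan is to split the argument into an asymptotic level part and a consistency part, and within the level part to split further according to whether the data-generating mean function lies in the interior of the null ($d_\infty<\Delta$) or on its boundary ($d_\infty=\Delta$). All of the heavy lifting has already been done in Theorem \ref{Thm:Nulldist} and in the long-run variance estimation result $\hat\sigma^2=\sigma^2+O_\P(n^{-1/3})$; what remains is essentially a bookkeeping exercise combining these with the upper bound \eqref{dette3} and Slutsky's lemma.

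For the asymptotic level, assume $d_\infty\leq\Delta$. If $d_\infty<\Delta$, Theorem \ref{Thm:Nulldist}(2) gives $\hat T_{n,\Delta}\overset{\P}{\to}-\infty$, while $\hat\sigma q_{1-\alpha}$ is stochastically bounded by the consistency of $\hat\sigma^2$, so the rejection probability tends to $0$. If $d_\infty=\Delta$, Theorem \ref{Thm:Nulldist}(1) yields $\hat T_{n,\Delta}\overset{d}{\to}T_\Delta$, and combining this with $\hat\sigma\overset{\P}{\to}\sigma$ (a direct consequence of the rate $O_\P(n^{-1/3})$) via Slutsky gives joint convergence and hence
\begin{align}
\P\bigl(\hat T_{n,\Delta}\geq \hat\sigma q_{1-\alpha}\bigr) \longrightarrow \P\bigl(T_\Delta \geq \sigma q_{1-\alpha}\bigr),
\end{align}
where we use the continuity of the distribution of $T_\Delta$ asserted in Theorem \ref{Thm:Nulldist} to pass from convergence in distribution to convergence of the tail probability at the fixed level $\sigma q_{1-\alpha}$. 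The distributional bound \eqref{dette3} then gives
\begin{align}
\P\bigl(T_\Delta \geq \sigma q_{1-\alpha}\bigr) \leq \P\bigl(\sigma M \geq \sigma q_{1-\alpha}\bigr) = \P\bigl(M \geq q_{1-\alpha}\bigr) = \alpha,
\end{align}
by the definition of $q_{1-\alpha}$ and the continuity of the distribution of $M$. Taking the $\limsup$ over both subcases proves that the test has asymptotic level $\alpha$ uniformly across the null region.

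For consistency under $H_1(\Delta)$, i.e.\ $d_\infty>\Delta$, Theorem \ref{Thm:Nulldist}(2) directly gives $\hat T_{n,\Delta}\overset{\P}{\to}\infty$, while $\hat\sigma q_{1-\alpha}=\sigma q_{1-\alpha}+o_\P(1)=O_\P(1)$, so $\P(\hat T_{n,\Delta}\geq\hat\sigma q_{1-\alpha})\to 1$.

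There is no real obstacle in this argument: all the technically delicate work — the weak convergence on the boundary, the divergence away from it, the continuity of $T_{d_\infty}$, the distributional dominance by $\sigma M$, and the consistency of $\hat\sigma^2$ — has been done elsewhere. The only point requiring a small amount of care is the boundary case, where one must justify joint convergence of $(\hat T_{n,\Delta},\hat\sigma)$ to $(T_\Delta,\sigma)$ (immediate from the fact that $\hat\sigma$ converges in probability to a constant) and then invoke continuity of the distribution of $T_\Delta$ at the deterministic level $\sigma q_{1-\alpha}$ to conclude. Since the paper does not explicitly state that $T_\Delta$ has a density, it is worth noting that the continuity claim in Theorem \ref{Thm:Nulldist} already suffices, as it excludes atoms at the relevant quantile.
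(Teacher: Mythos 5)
Your proof is correct and follows exactly the argument the paper leaves implicit: the paper presents this theorem as an immediate consequence of Theorem \ref{Thm:Nulldist}, the distributional dominance \eqref{dette3}, and the consistency of $\hat\sigma^2$, and never writes out a separate proof. You have filled in precisely the intended bookkeeping — Slutsky for the joint convergence $(\hat T_{n,\Delta},\hat\sigma)\Rightarrow(T_\Delta,\sigma)$ at the boundary, the continuity of the law of $T_\Delta$ to pass to the tail probability, the bound $\P(T_\Delta\geq\sigma q_{1-\alpha})\leq\P(M\geq q_{1-\alpha})=\alpha$, and the two divergence cases away from the boundary. The only minor quibble is the phrase ``uniformly across the null region'': your argument establishes the pointwise asymptotic level statement (which is all the theorem claims), not uniformity in the family of mean functions with $d_\infty\leq\Delta$; strike that word and the argument is clean.
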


We continue investigating the  asymptotic power properties of the test \eqref{test1} by considering a class of local alternatives of the form
\begin{align}
\label{p10}
    \mu_n(t)-\mu_0^{t_0}=\Delta+\beta_nh(t)~,
\end{align}
where $h$ is some non-negative and Lipschitz continuous function.
\begin{theorem}
 \label{Thm:LocAlt}
Let assumptions   (A1) and (A2) and be satisfied and consider local alternatives of the form \eqref{p10} with $\beta_n=n^{-1/2}$, then

     \begin{align}
\label{LocAlt}
    \hat T_{n,\Delta}\overset{d}{\rightarrow}\sigma\sup_{t_0\leq s<t\leq 1}\Big( \sqrt{t-s}\frac{B(t_0)}{t_0}-\frac{B(t)-B(s)}{\sqrt{t-s}}-\Gamma(t-s)+    
    {1 \over \sqrt{t-s}}\int_s^th(x)dx \Big)
\end{align}
 \end{theorem}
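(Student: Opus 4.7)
The plan is to adapt the decomposition argument of Theorem~\ref{Thm:Nulldist} to the local alternative regime, exploiting that under \eqref{p10} with $h \geq 0$ the deterministic part of $\hat\mu_0^{\lfloor nt_0\rfloor} - \hat\mu_j^k$ stays uniformly below $-\Delta/2$, so that the absolute value inside $\hat T_{n,\Delta}$ is resolved with a fixed negative sign and---in contrast to the boundary case---no extremal-set restriction is needed. First, split $X_i = \mu_n(i/n) + \epsilon_i$ and write $\hat\mu_0^{\lfloor nt_0\rfloor} - \hat\mu_j^k = D_n(j,k) + \xi_n(j,k)$, where $D_n$ collects the deterministic means and $\xi_n$ the noise. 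Using the explicit form of $\mu_n$, Lipschitz continuity of $h$, and a standard Riemann-sum bound, one obtains, uniformly in $s := j/n$, $t := k/n$ with $k - j \geq c_n$,
\begin{align}
D_n(j,k) = -\Delta - n^{-1/2}\cdot\frac{1}{t-s}\int_s^t h(x)\,dx + o(n^{-1/2}).
\end{align}

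Next, assumption (A1) combined with a standard Brownian modulus-of-continuity estimate and $c_n \to \infty$ yields $\sup_{(j,k)}|\xi_n(j,k)| = o_\p(1)$, so on an event of probability tending to one $\hat\mu_0^{\lfloor nt_0\rfloor} - \hat\mu_j^k < 0$ simultaneously for every admissible $(j,k)$ and the absolute value can be dropped with a sign flip. Using $\sqrt{c/n} = \sqrt{t-s}$ and the identity $\Gamma_n(c) = \Gamma(t-s)$ at $c = n(t-s)$, this rewrites
\begin{align}
\hat T_{n,\Delta} = \sup_{(j,k)}\Big\{\tfrac{1}{\sqrt{t-s}}\int_s^t h(x)\,dx - \sqrt{c}\,\xi_n(j,k) - \Gamma(t-s)\Big\} + o_\p(1).
\end{align}
Applying (A1) separately to the partial sums in $\hat\mu_0^{\lfloor nt_0\rfloor}$ and $\hat\mu_j^k$ and rescaling by the standard Brownian motion $\tilde B_n(x) := n^{-1/2}B(nx)$ replaces $\sqrt{c}\,\xi_n(j,k)$ by $\sigma\bigl(\sqrt{t-s}\,\tilde B_n(t_0)/t_0 - (\tilde B_n(t) - \tilde B_n(s))/\sqrt{t-s}\bigr)$ up to remainders $O(c^{-p}) + O(n^{-p})$ that vanish uniformly because $c \geq c_n \to \infty$.

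The final step is the passage to the limit of the supremum by the continuous mapping theorem. The drift $(s,t)\mapsto(t-s)^{-1/2}\int_s^t h(x)\,dx$ is bounded and continuous on $\{t_0 \leq s < t \leq 1\}$ (Lipschitz continuity of $h$ makes it vanish at the diagonal), so it combines additively with the Gaussian process, and the reflection symmetry $\tilde B_n \stackrel{d}{=} -\tilde B_n$ allows the sign convention to be aligned with \eqref{LocAlt}. This produces exactly the claimed limit. The main technical obstacle is not the calculation itself but the uniform handling of the smallest scales $c \approx c_n$, where the strong invariance errors in (A1) are largest and the penalty $\Gamma_n(c)$ most severe; this is the same tightness issue already overcome in the proof of Theorem~\ref{Thm:Nulldist}, controlled there via the condition $n^{1-2p}/c_n = o(1)$ together with standard exponential modulus-of-continuity estimates for Brownian motion, and is simply borrowed verbatim.
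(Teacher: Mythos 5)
Your proposal is correct and in substance follows the same route as the paper's (very terse) proof. The paper says only that one modifies Lemmas~\ref{Lemma:Upper}--\ref{Lemma:Lower} by replacing the centering term with the expansion $\mathfrak{s}(s,t)\big(\mu_0^{t_0}-\mu_s^t\big)-\Delta = \beta_n(t-s)^{-1}\int_s^t h(x)\,dx$, and you reproduce exactly this calculation: the deterministic drift after scaling by $\sqrt{c}$ becomes $(t-s)^{-1/2}\int_s^t h$, the strong approximation (A1) together with condition \eqref{p1} supplies the Gaussian part, and the discrete-to-continuous passage of the supremum is the same D\"umbgen-type argument already carried out in Lemma~\ref{Lemma:Lower}. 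Your additional observation — that under \eqref{p10} the sign $\mathfrak{s}(s,t)$ is uniformly $-1$ and every pair $(s,t)$ lies (eventually) in the extremal set $\mathcal{E}_c$, so the restriction to extremal sets becomes vacuous and the supremum runs over the whole triangle $\{t_0\le s<t\le 1\}$ — is precisely the reason the limit in \eqref{LocAlt} has no $\mathcal{A}_{\epsilon,d_\infty}$ restriction, and it is implicit in the paper's reference to the three lemmas. Two minor imprecisions do not affect the argument: the remainder from (A1) applied to $\sum_{i=j+1}^k\epsilon_i$ is $O(n^{1/2-p}/\sqrt{c})$ rather than $O(c^{-p})$, which is still $o(1)$ uniformly by \eqref{p1}; and the sign flip via $B\overset{d}{=}-B$ is allowable only because the limiting functional is taken in distribution, which you correctly invoke.
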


We collect some observations that follow from this  result in the following remark.
 \begin{remark} ~~~~
    {\rm 
    \begin{itemize}
        \item[(1)] By  Theorem \ref{Thm:LocAlt}  the test 
        \eqref{test1} 
can detect local alternatives converging to the null hypothesis at a parametric rate $n^{-1/2}$. This establishes a substantial improvement over the results obtained in \citep*{buecher21}, where the nonparametric rate
            \begin{align}
                \beta_n\simeq \Big(\sqrt{nh_n\log(h_n)}\Big)^{-1}
            \end{align}
        is required to obtain non-trivial power. Here $h_n$ is the bandwidth used for 
        the local linear estimator of the regression function $\mu$.
   \item[(2)] It 
is clear from the inequality \eqref{dette3}    that the test \eqref{test1} is  in general conservative even in the case that $|\mu(t)-\mu_0^{t_0}|=\Delta$ for all $t \in [t_0,1]$.
Comparing the definitions of the random variables $T_\Delta$ and $M$ in \eqref{p2} and \eqref{p4}, respectively, the difference $\mathbb{P} (T_\Delta > q)  -  \mathbb{P} (\sigma M > q)$ will be large for those models where   $d_\infty=\Delta$ and where at the same time the set  $ \{ s \in [t_0,1] ~|~ |\mu_0^{t_0}-\mu(s)| <\Delta \} $ is ``large''.  This 
indicates  the need for a test procedure that is able to take into account the structure of the 
        sets ${\cal A}_{\varepsilon,d_\infty}$ appearing in the definition of the random variable $T_{d_\infty}$ in \eqref{hd11}.     
    \end{itemize}
}  
\end{remark}
    
     To alleviate the issue raised in the second part of the previous remark we will develop an alternative test 
which uses quantiles from a distribution which approximates the distribution of the random variable $T_\Delta$ more directly. Obviously, such an approach  has to take the estimation of the sets $
        \mathcal{A}_{\epsilon,d_\infty}$ in \eqref{det1} 
        into account.
For this purpose  we define  for a scale parameter $c \in \N$
\begin{align}
    \hat{\mathcal{E}}_c=\Big\{ (j,k) \in \{\lfloor nt_0 \rfloor,\ldots ,n\}^2 ~\Big|~ k-j=c, ~\big|\hat \mu_0^{\lfloor nt_0 \rfloor}-\hat\mu_{j}^{k}\big|\geq \hat d_{\infty,c}-\hat \sigma \log(n)/\sqrt{c} \Big\} ~
\end{align}
as an estimator of the extremal set at scale $c$,  where 
$$
\hat d_{\infty,c}=\max_{k-j=c, \lfloor nt_0 \rfloor\leq j<k \leq n}|\hat \mu_0^{\lfloor nt_0 \rfloor}-\hat \mu_j^k|.
$$
Note that the sequence of sets
\begin{align}
    \hat{\mathcal{A}}_n=\bigcup_{c_n \leq c \leq n-\lfloor nt_0 \rfloor}\hat{\mathcal{E}}_c
\end{align}
may heuristically be interpreted as a sequence of estimators for the sets $\mathcal{A}_{\epsilon,d_\infty}$ for a suitable sequence of $\epsilon\downarrow 0$. Next we introduce 
\begin{align}    
    \hat {\mathfrak{s}}(j/n,k/n)&=\text{sgn}((\hat \mu_0^{\lfloor nt_0 \rfloor}-\hat \mu_j^k)
\end{align}
as an estimator of  the sign $\mathfrak{s}(j/n,k/n)$ in \eqref{det3} and consider the 
statistic  
\begin{align}
    \hat T_n^*= \hat \sigma \sup_{\substack{c_n \leq c \leq n-\lfloor nt_0 \rfloor \\ c \in \N }}\sup_{(j,k) \in \hat {\mathcal{E}}_c}\hat{\mathfrak{s}}(j/n,k/n)\Big(\sqrt{c}\frac{B(\lfloor nt_0 \rfloor)}{\lfloor nt_0 \rfloor}-\frac{B(k)-B(j)}{\sqrt{c}}\Big)- \Gamma_n(c)    
\end{align}
where $\hat \sigma^2$ is the estimator of the long run variance defined in \eqref{det4} and suprema over empty sets are defined as $0$. We denote 
the $(1-\alpha)$-quantile of the distribution of $\hat T_n^*$ by  $q_{1-\alpha}^*$, which  can  easily be  simulated. 
Our second test for a practically relevant deviation from deviation from the average $\mu_0^{t_0}$    rejects the null hypothesis in \eqref{hypotheses}, whenever
\begin{align}
    \label{test2}
       \hat T_{n,\Delta}\geq q^*_{1-\alpha} 
    \end{align}
and the following result shows that this decision rule defines a consistent asymptotic level $\alpha$, which can detect local alternatives converging to null hypothesis at a parametric rate.


\begin{theorem}
\label{Thm:BootCons}
     Under assumptions  (A1) and (A2) the test \eqref{test2} is  consistent and has  asymptotic level $\alpha$. More precisely,
    \begin{align*} 
        \p(\hat T_{n,\Delta}\geq q^*_{1-\alpha})\rightarrow \begin{cases}
            0 \quad d_\infty<\Delta \\
            \alpha \quad d_\infty=\Delta \\
            1 \quad d_\infty>\Delta 
        \end{cases}
    \end{align*}
\end{theorem}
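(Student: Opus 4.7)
The plan is to deduce Theorem \ref{Thm:BootCons} from Theorem \ref{Thm:Nulldist} by showing that the simulated quantile $q^*_{1-\alpha}$ of $\hat T_n^*$ is a consistent estimator of $q_{1-\alpha}(T_\Delta)$, the $(1-\alpha)$-quantile of the limit $T_\Delta$ at the boundary $d_\infty = \Delta$, and that $q^*_{1-\alpha} = O_\p(1)$ regardless of the value of $d_\infty$. Combined with the three dichotomies already granted by Theorem \ref{Thm:Nulldist}, namely $\hat T_{n,\Delta} \overset{d}{\longrightarrow} T_\Delta$ at the boundary and $\hat T_{n,\Delta} \overset{\p}{\longrightarrow} \pm \infty$ away from it, the three cases of the statement follow by Slutsky's lemma together with the continuity of the distribution function of $T_\Delta$ proved in Theorem \ref{Thm:Nulldist}.

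The heart of the argument is the convergence $q^*_{1-\alpha} \overset{\p}{\longrightarrow} q_{1-\alpha}(T_\Delta)$ at the boundary, which I would establish in four sub-steps. First, using the rate $\hat\sigma^2 = \sigma^2 + O_\p(n^{-1/3})$ from Section \ref{sec5.4}, replace $\hat\sigma$ by $\sigma$ throughout. Second, show that uniformly over admissible $(c,j,k)$ the deterministic approximation $\sqrt{c}(\hat\mu_0^{\lfloor nt_0\rfloor} - \hat\mu_j^k) - \sqrt{c}(\mu_0^{t_0} - \mu_{j/n}^{k/n})$ is $O_\p(\sqrt{\log n})$; this combines the strong approximation in (A1) (whose error is $o(\sqrt{c})$ by \eqref{p1}) with the Gaussian maximal inequality applied to the process $(s,t) \mapsto \sqrt{t-s}\, B(t_0)/t_0 - (B(t) - B(s))/\sqrt{t-s}$. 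Hence the slack $\hat\sigma \log(n)/\sqrt{c}$ inserted in the definition of $\hat{\mathcal{E}}_c$ dominates the stochastic fluctuation, so with probability tending to one the estimated set $\hat{\mathcal{A}}_n$ is sandwiched between the population sets $\mathcal{A}_{\epsilon_n', d_\infty}$ and $\mathcal{A}_{\epsilon_n, d_\infty}$ for sequences $\epsilon_n', \epsilon_n \downarrow 0$. Third, on these sets $|\mu_0^{t_0} - \mu_{j/n}^{k/n}|$ is bounded away from zero, so $\hat{\mathfrak{s}}(j/n,k/n) = \mathfrak{s}(j/n,k/n)$ eventually almost surely. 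Fourth, since the driving Brownian motion in $\hat T_n^*$ is (by construction of the simulation) independent of the data, a continuous mapping argument for the supremum over shrinking index sets shows that, conditionally on the data and in probability, the law of $\hat T_n^*$ converges weakly to the law of $T_\Delta$, so that $q^*_{1-\alpha} \overset{\p}{\longrightarrow} q_{1-\alpha}(T_\Delta)$.

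The main obstacle is precisely this set-valued sandwiching: because $T_\Delta$ is defined via the outer limit $\epsilon \downarrow 0$ in \eqref{p2}, for any fixed $\epsilon > 0$ the supremum over $\mathcal{A}_{\epsilon, d_\infty}$ is strictly larger than $T_\Delta$. Consequently, the rate at which $\hat{\mathcal{A}}_n$ shrinks toward $\mathcal{A}_{0, d_\infty}$ must be matched uniformly in $c$ with the Gaussian concentration rate; the slack $\hat\sigma \log(n)/\sqrt{c}$ is tuned exactly for this, and making the sandwich quantitative by a chaining argument is the technical crux of the proof. The remaining case distinctions are straightforward: a deterministic bound $\hat T_n^* \leq \hat\sigma M_n^*$, with $M_n^*$ the discrete analog of the variable $M$ in \eqref{p4}, together with the continuity of the law of $M$, gives $q^*_{1-\alpha} = O_\p(1)$. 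Combined with $\hat T_{n,\Delta} \overset{\p}{\longrightarrow} -\infty$ and $\hat T_{n,\Delta} \overset{\p}{\longrightarrow} +\infty$ this yields the asymptotic rejection probabilities $0$ and $1$, respectively, while at the boundary the convergence $q^*_{1-\alpha} \overset{\p}{\longrightarrow} q_{1-\alpha}(T_\Delta)$ together with Theorem \ref{Thm:Nulldist} and Slutsky delivers level $\alpha$.
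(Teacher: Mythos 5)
Your proposal is correct and takes essentially the same approach as the paper: consistency of $\hat\sigma$, sandwiching the estimated extremal sets between population-level sets (this is the paper's Lemma~\ref{Lemma:Extremalsets}, which sandwiches $\hat{\mathcal{E}}_c(a)$ between $\mathcal{E}_c(a\mp\epsilon)$), sign consistency via Lemma~\ref{intapprox} and the concentration bound \eqref{p3}, transfer to the limit $T_\Delta$ via Lemma~\ref{Lemma:epslimit}, and the stochastic boundedness of $\hat T_n^*$ (bounded by $\hat\sigma$ times the discrete analogue of $M$) combined with the divergence of $\hat T_{n,\Delta}$ for the off-boundary cases. The only cosmetic difference is that you spell out the conditional weak convergence of $\hat T_n^*$ to $T_\Delta$ and the final Slutsky step, which the paper leaves implicit.
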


\begin{remark}
   {\rm 
\label{DeltaChoice}
 An important question from a practical point of view is the choice of the threshold $\Delta > 0 $, which has to be carefully discussed for each specific application. Essentially, this boils down to the important question when a deviation from the reference value $\mu_0^{t_0}$ is practically significant, which 
 is related to
the specification of the effect size \citep[see][]{cohen1988}. While in many situations, such as in the climate data example mentioned before,  this specification is quite obvious, there are 
 other applications where this  choice might be less clear. However,  for such cases it is possible to determine a threshold from the data which can serve as  measure of evidence  for a  deviation of $\mu$ from  the long term average $\mu_0^{t_0}$ with a controlled type I error  $\alpha$.
   
    To be precise, note that the hypotheses $H_0(\Delta_1)$ and $H_0(\Delta_2)$ in \eqref{hypotheses} are nested for $\Delta_1 < \Delta_2$ and  that the test statistic \eqref{teststat} is monotone in $\Delta$. As  the quantile  $q^*_{1-\alpha}$ does not depend on $\Delta$,  rejecting $H_0(\Delta)$ for $\Delta=\Delta_1$ also implies rejecting $H_0(\Delta)$ for all $\Delta<\Delta_1$. The sequential rejection principle then yields that we may simultaneously test the hypotheses \eqref{hypotheses} for different choices of $\Delta\geq 0$ until we find the minimum value, say  $\hat \Delta_\alpha$,  for which $H_0(\Delta)$ is not rejected, that is 
    \begin{align}
    \label{deltahat}
        \hat \Delta_\alpha:=\min \big \{\Delta \ge 0 \,| \, \hat T_{n,\Delta}\leq q^*_{1-\alpha} \big  \}~.
    \end{align}
    Consequently, one may postpone the selection of $\Delta$ until one has seen the data. The same arguments of course also hold for the more conservative procedure defined by \eqref{test1}.      
}  
\end{remark}

\section{Estimating the time of the first relevant deviation}
\label{sec5}

   \def\theequation{3.\arabic{equation}}	
   \setcounter{equation}{0}

If a relevant deviation from a benchmark has been detected, it is of interest to determine  the first time where this deviation occurs,  that is 
\begin{align}
\label{det000}
    t^*=\min\big\{t \in [t_0,1]\big| ~|~\mu(t)-\mu_0^{t_0}| \geq \Delta \big\}~.
\end{align}
A natural  estimator for $t^*$ is  the first time $k$ where at least one estimated difference $\hat \mu_0^{\lfloor nt_0 \rfloor}-\hat \mu_j^k$  exceeds  approximately $\Delta$,  and therefore  we define
\begin{align}
\label{Def:LocEst}
    \hat t=\min\Big\{k \geq \lfloor nt_0 \rfloor+c_n  ~\Big| ~\exists j \in \{\lfloor nt_0 \rfloor,\ldots  ,k-c_n\} \text{ such that }~ \\
    |\hat \mu_0^{\lfloor nt_0 \rfloor}-\hat \mu_j^k|\geq \Delta-\frac{\hat \sigma\log(n)}{\sqrt{k-j}} \Big\}
\end{align}
where the constant $c_n$ satisfies \eqref{p1} (here we define the minimum over an empty set as $\infty$). In the following discussion we  investigate the theoretical performance of this estimator, distinguishing the case where $t^*$ is a point of continuity  of $|\mu(t)-\mu_0^{t_0}|$ and where it is not. 

For this purpose, we introduce the function 
$$
t \to d(t)=|\mu(t)-\mu_0^{t_0}|
$$
and consider  first  the smooth setting. Intuitively,  a change at a point of continuity will be harder to detect if the function  $\mu$ is very flat at $t^*$. To  quantify this property,  we assume that there exists constants $\kappa, c_\kappa > 0 $ such that
\begin{align}
\label{smooth}
    \lim_{t \uparrow t^*}\frac{|d(t^*)-d(t)|}{(t^*-t)^\kappa}\rightarrow c_\kappa~.
\end{align}
To obtain an explicit convergence rate we also need an assumption to ensure that the function $d$ does not behave too irregularly close to the point $t^*$.
\begin{enumerate}
        \item[(A3)] For some constant $\gamma \in (0,t^*)$ the function $t \rightarrow \text{sign}(d(t))d(t)$ is increasing on the set 
        $$
    U_\gamma(t^*)=\big\{t \in [t_0,t^*]~\big| ~t>t^*-\gamma \big \}.
    $$
\end{enumerate}

\begin{theorem}
\label{thm31}
 Let Assumptions   (A1) - (A3) be satisified.
 \begin{itemize}
     \item[(a)]
 If $t^* \in (t_0,1]$,  condition \eqref{smooth} holds   and 
$   c_n^{\kappa+1/2}\lesssim n^\kappa\sqrt{\log(n)}$, 
then 
    \begin{align}
        \hat t=t^*+O_\P\Big(\frac{\log(n)}{\sqrt{c_n}}\Big)^{1/\kappa}~.
    \end{align}
     \item[(b)]    If $t^*=\infty$ we have $\P(\hat t=\infty)=1-o(1)$.
     \end{itemize}
\end{theorem}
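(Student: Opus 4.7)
The plan is to establish Part (a) via a matching pair of bounds, $\hat t\le\lfloor nt^*\rfloor$ and $\hat t\ge n(t^*-C\delta_n)$ with $\delta_n=(\log n/\sqrt{c_n})^{1/\kappa}$, each holding with probability tending to one, so that $\hat t/n=t^*+O_\P(\delta_n)$; Part (b) will follow from a global uniform non-detection argument. The unifying probabilistic ingredient is the uniform strong approximation
\begin{equation*}
    \sup_{\substack{c_n\le k-j\\ \lfloor nt_0\rfloor\le j<k\le n}}\sqrt{k-j}\,\bigl|\hat\mu_j^k-\mu_{j/n}^{k/n}\bigr|=O_\P\bigl(\sqrt{\log n}\bigr),
\end{equation*}
derived from (A1) via a maximal inequality for Brownian increments, together with $\hat\mu_0^{\lfloor nt_0\rfloor}-\mu_0^{t_0}=O_\P(n^{-1/2})$ and $\hat\sigma=\sigma+o_\P(1)$; the strong-approximation remainder $n^{1/2-p}/(k-j)$ is negligible by \eqref{p1}. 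This reduces the analysis to the deterministic averages. Writing $d(x)=\mu(x)-\mu_0^{t_0}$, I will rely on two estimates: on $U_\gamma(t^*)$ the combination of \eqref{smooth} and (A3) yields $|d(x)|\le\Delta-c_\kappa(t^*-x)^\kappa/2$ for $x$ close enough to $t^*$, while on $[t_0,t^*-\gamma]$ the minimality of $t^*$ together with piecewise Lipschitz continuity of $\mu$ gives a constant gap $g_0:=\Delta-\sup_{x\in[t_0,t^*-\gamma]}|d(x)|>0$.

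For the upper bound I would test the single pair $(j^*,k^*)=(\lfloor nt^*\rfloor-\lceil c_n\rceil,\lfloor nt^*\rfloor)$, whose interval sits inside $U_\gamma(t^*)$ and hugs $t^*$ from the left. Integrating the first estimate above gives $|\mu_{j^*/n}^{k^*/n}-\mu_0^{t_0}|\ge\Delta-C(c_n/n)^\kappa$. The standing assumption $c_n^{\kappa+1/2}\lesssim n^\kappa\sqrt{\log n}$ is equivalent to $(c_n/n)^\kappa\lesssim\log(n)/\sqrt{c_n}$, so this signal loss is absorbed into the slack $\hat\sigma\log(n)/\sqrt{c_n}$ along with the single-pair noise $O_\P(1/\sqrt{c_n})$. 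Detection therefore fires at $k^*$, yielding $\hat t\le\lfloor nt^*\rfloor$ with probability tending to one.

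The lower bound is the technical core. Fix $\delta_n$ as above with a sufficiently large multiplicative constant and consider any $(j,k)$ with $k/n\le t^*-\delta_n$ and $k-j\ge c_n$; the aim is $|\mu_{j/n}^{k/n}-\mu_0^{t_0}|\le\Delta-c'\delta_n^\kappa$. On the portion of $[j/n,k/n]$ lying in $U_\gamma(t^*)$ every point is at distance at least $\delta_n$ from $t^*$, so the first estimate yields $|d(x)|\le\Delta-c_\kappa\delta_n^\kappa/2$; on the complementary portion $g_0$ dominates $c'\delta_n^\kappa$ for $\delta_n<\gamma$. The weighted average of the two gaps is at least their minimum, which for small $\delta_n$ is the $\delta_n^\kappa$-term. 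Inserting the uniform noise bound, the detection criterion reduces to the requirement $c'\delta_n^\kappa>C''\log(n)/\sqrt{c_n}$, which is violated by our choice of $\delta_n$; hence no $k\le n(t^*-\delta_n)$ triggers detection.

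Part (b) is a global version of the previous step: when $t^*=\infty$ one has $d_\infty<\Delta$, and the fixed gap $\epsilon=\Delta-d_\infty$ uniformly dominates the combined uniform noise and slack $O_\P(\log n/\sqrt{c_n})=o_\P(1)$ (by \eqref{p1}), precluding detection at any $(j,k)$ and yielding $\P(\hat t=\infty)\to 1$. The main technical obstacle is the averaging argument for intervals straddling the boundary of $U_\gamma(t^*)$: one needs the inside contribution carrying the $\kappa$-rate to dominate the weighted average, which is valid because $g_0$ is a fixed positive constant while the $\delta_n^\kappa$-term vanishes, so for small $\delta_n$ the binding bound is the $\kappa$-rate one.
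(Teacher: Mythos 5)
Your proposal follows essentially the same route as the paper's proof: strong approximation to reduce to deterministic local averages plus uniform $O_\P(\sqrt{\log n/c_n})$ noise, an upper bound via the single pair $(\lfloor nt^*\rfloor - c_n, \lfloor nt^*\rfloor)$ whose deterministic signal is $\Delta - O((c_n/n)^\kappa)$, and a lower bound showing that for $k/n \le t^* - \delta_n$ with $\delta_n \asymp (\log n/\sqrt{c_n})^{1/\kappa}$ (suitably large constant) the deterministic gap, controlled through (A3) on $U_\gamma(t^*)$ and a fixed positive gap on $[t_0,t^*-\gamma]$, dominates the slack $\hat\sigma\log n/\sqrt{c_n}$ and the fluctuations, so detection cannot fire; part (b) is handled by the same domination with a constant gap. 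One small inaccuracy: the stated ``equivalence'' is off — $c_n^{\kappa+1/2}\lesssim n^\kappa\sqrt{\log n}$ rewrites as $(c_n/n)^\kappa\lesssim\sqrt{\log n}/\sqrt{c_n}$, not $\log(n)/\sqrt{c_n}$; the $\sqrt{\log n}$ form is in fact what you need, since it makes the signal loss $o(\log n/\sqrt{c_n})$ and hence absorbable into the slack regardless of constants, whereas your weaker version would leave a constant comparison against $\hat\sigma$ unresolved.
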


Next we discuss the case, where there is a jump at the point $t^*$ and  assume for some $\epsilon>0$ that 
\begin{align}
\label{hd7}     |\mu(t)-\mu_0^{t_0}|\begin{cases}
        <\Delta-\epsilon &\text{ if }  ~t<t^*\\
        \geq \Delta &\text{ if }  ~ t=t^*\\
        \geq \Delta+O(t-t^*)
        & \text{ if } ~ 
        \tilde t>t>t^*
    \end{cases}
\end{align}
where 
$\tilde t > t^* $ is the smallest point with a jump  of the function $\mu$ at $\tilde t$ (if there are no  jumps for $t >t^*$ we set $\tilde t=1$). 
\begin{theorem}
\label{thm32}
 Let Assumptions   (A1) - (A3) be satisfied and let $c_n$ satisfy $c_n^{3/2} \lesssim n\sqrt{\log(n)}$. Then
 \begin{itemize}
     \item[(a)]
 If  $t^* \in (t_0,1]$ is a jump discontinuity satisfying \eqref{hd7}, then 
    \begin{align}
        \hat t=t^*+O_\P\Big(\frac{c_n}{n}\Big).
    \end{align}
    \item[(b)]    If $t^*=\infty$,  we have $\P(\hat t=\infty)=1-o(1)$.
    \end{itemize} 
\end{theorem}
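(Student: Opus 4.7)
The plan is to establish a two-sided sandwich $\lfloor n t^* \rfloor \le \hat t \le \lfloor n t^* \rfloor + c_n$ with probability tending to $1$, from which the conclusion $\hat t = t^* + O_\P(c_n/n)$ (with $\hat t$ read as $\hat t/n$) follows. Part (b) will be obtained as a by-product of the same uniform control used for the lower bound.

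For the \emph{lower bound} I would show that, with probability tending to $1$, the exceedance criterion defining $\hat t$ is violated for \emph{every} pair $(j,k)$ with $\lfloor n t_0\rfloor \le j < k \le \lfloor n t^*\rfloor - 1$ and $k-j \ge c_n$. For such a pair the interval $[j/n,k/n]$ lies entirely in $[t_0,t^*)$, where \eqref{hd7} yields $|\mu(i/n)-\mu_0^{t_0}| < \Delta-\epsilon$; averaging gives $|\mathbb{E}[\hat\mu_j^k]-\mu_0^{t_0}|<\Delta-\epsilon$, up to an $O(1/n)$ discretization error. Using (A1) and a standard maximal inequality for Brownian increments, the stochastic error satisfies
\begin{align}
\sup_{\substack{c_n\le k-j\\ \lfloor n t_0\rfloor\le j<k\le n}} |\hat\mu_j^k-\mathbb{E}\hat\mu_j^k| \;=\; O_\P\!\Big(\sqrt{\log n/c_n}\Big),
\end{align}
which is $o(1)$ because \eqref{p1} forces $c_n \gg n^{1-2p} \gg \log^2 n$. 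Together with the consistent estimator $\hat\sigma$ and the bound $\hat\sigma\log(n)/\sqrt{k-j} \le \hat\sigma\log(n)/\sqrt{c_n} \to 0$, this yields uniformly $|\hat\mu_0^{\lfloor n t_0\rfloor}-\hat\mu_j^k| < \Delta-\epsilon/2 < \Delta - \hat\sigma\log(n)/\sqrt{k-j}$, so no such pair triggers the min, giving $\hat t \ge \lfloor n t^*\rfloor$.

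For the \emph{upper bound} I exhibit a single witness: take $j^\star=\lfloor n t^*\rfloor$ and $k^\star=j^\star+c_n$. By \eqref{hd7}, $|\mu(t)-\mu_0^{t_0}|\ge \Delta - C(t-t^*)$ throughout $[t^*,t^*+c_n/n]$, so the deterministic mean satisfies $|\mu_{j^\star}^{k^\star}-\mu_0^{t_0}|\ge \Delta - Cc_n/n$. The stochastic error at this single scale is $O_\P(1/\sqrt{c_n})$ by (A1), hence
\begin{align}
\bigl|\hat\mu_0^{\lfloor n t_0\rfloor}-\hat\mu_{j^\star}^{k^\star}\bigr| \;\ge\; \Delta - C\,c_n/n - O_\P(1/\sqrt{c_n}).
\end{align}
This dominates $\Delta-\hat\sigma\log(n)/\sqrt{c_n}$ as soon as $Cc_n/n + O_\P(1/\sqrt{c_n}) \le \hat\sigma\log(n)/\sqrt{c_n}$, i.e.\ $c_n^{3/2}/n \lesssim \log n$, which is exactly the hypothesis $c_n^{3/2}\lesssim n\sqrt{\log n}$. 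Consequently $\hat t \le k^\star = \lfloor n t^*\rfloor+c_n$. Combining the two bounds yields $\hat t/n - t^* = O_\P(c_n/n)$.

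For part (b), the assumption $t^*=\infty$ means $\sup_{t\in[t_0,1]}|\mu(t)-\mu_0^{t_0}|<\Delta$, so there exists $\epsilon>0$ with $|\mu(t)-\mu_0^{t_0}|\le \Delta-\epsilon$ for all $t$. The uniform control from the lower-bound step then rules out exceedance on \emph{every} pair $(j,k)$, and $\hat t=\infty$ with probability $1-o(1)$. The principal obstacle is obtaining the uniform bound on the empirical process across the huge family of scales $c_n\le k-j\le n-\lfloor n t_0\rfloor$ and starting points; I would handle this by the strong approximation (A1) to pass to Brownian increments and then apply a dyadic peeling over $k-j\in[2^\ell,2^{\ell+1})$ together with the standard Gaussian maximal inequality, which produces the $\sqrt{\log n/c_n}$ rate uniformly. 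The consistency statement $\hat\sigma^2=\sigma^2+O_\P(n^{-1/3})$ cited in the paper is used to convert $\hat\sigma$-dependent thresholds into deterministic ones.
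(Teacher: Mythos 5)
Your proof is correct and follows the same approach as the paper's: a two-sided sandwich in which the lower bound uses \eqref{hd7} to show that no pair $(j,k)$ with $[j/n,k/n]\subset[t_0,t^*)$ can trigger the exceedance criterion (combining the deterministic gap $\Delta-\epsilon$ with the uniform stochastic error from \eqref{p5} and the vanishing threshold slack $\hat\sigma\log n/\sqrt{c_n}$), while the upper bound exhibits the witness window of length $c_n$ just to the right of $\lfloor nt^*\rfloor$ and checks that the exceedance holds there under the stated constraint on $c_n$. One small difference: you invoke a pointwise error bound $O_\P(1/\sqrt{c_n})$ at the single witness scale, whereas the paper reuses the uniform bound $O_\P(\sqrt{\log n/c_n})$; both are dominated by the slack $\hat\sigma\log n/\sqrt{c_n}$, so either suffices, and in fact the binding constraint in both treatments is $c_n/n\lesssim\log n/\sqrt{c_n}$, which is a bit weaker than the stated hypothesis $c_n^{3/2}\lesssim n\sqrt{\log n}$ (your parenthetical "exactly" is a slight overstatement, though harmless). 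You also correctly make explicit the witness pair $(j^\star,k^\star)=(\lfloor nt^*\rfloor,\lfloor nt^*\rfloor+c_n)$, which the paper's display arguably garbles by writing $\hat\mu_{k_0}^{k_0+c_n}$, and you resolve the paper's implicit rescaling of $\hat t$ (an integer index in \eqref{Def:LocEst}) to $\hat t/n$ in the theorem statement.
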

Comparing the Theorem \ref{thm31} and \ref{thm32}, it is readily apparent that detecting smooth changes profits from a large $c_n$ (i.e. the mean is only estimated over longer intervals) while abrupt changes are easier detected if $c_n$ is chosen small (i.e. the mean is also estimated  over shorter intervals).

\section{Finite sample properties }
\label{sec6}

   \def\theequation{4.\arabic{equation}}	
   \setcounter{equation}{0}

In this section we investigate the finite sample properties of the proposed methodology  by means of  a simulation study and  illustrate its application by analyzing a real data example. For the sake of comparison we consider the same scenarios as investigated in \citep*{buecher21}.

\begin{figure}[t]
    \begin{center}
 \includegraphics[width=0.5\textwidth]{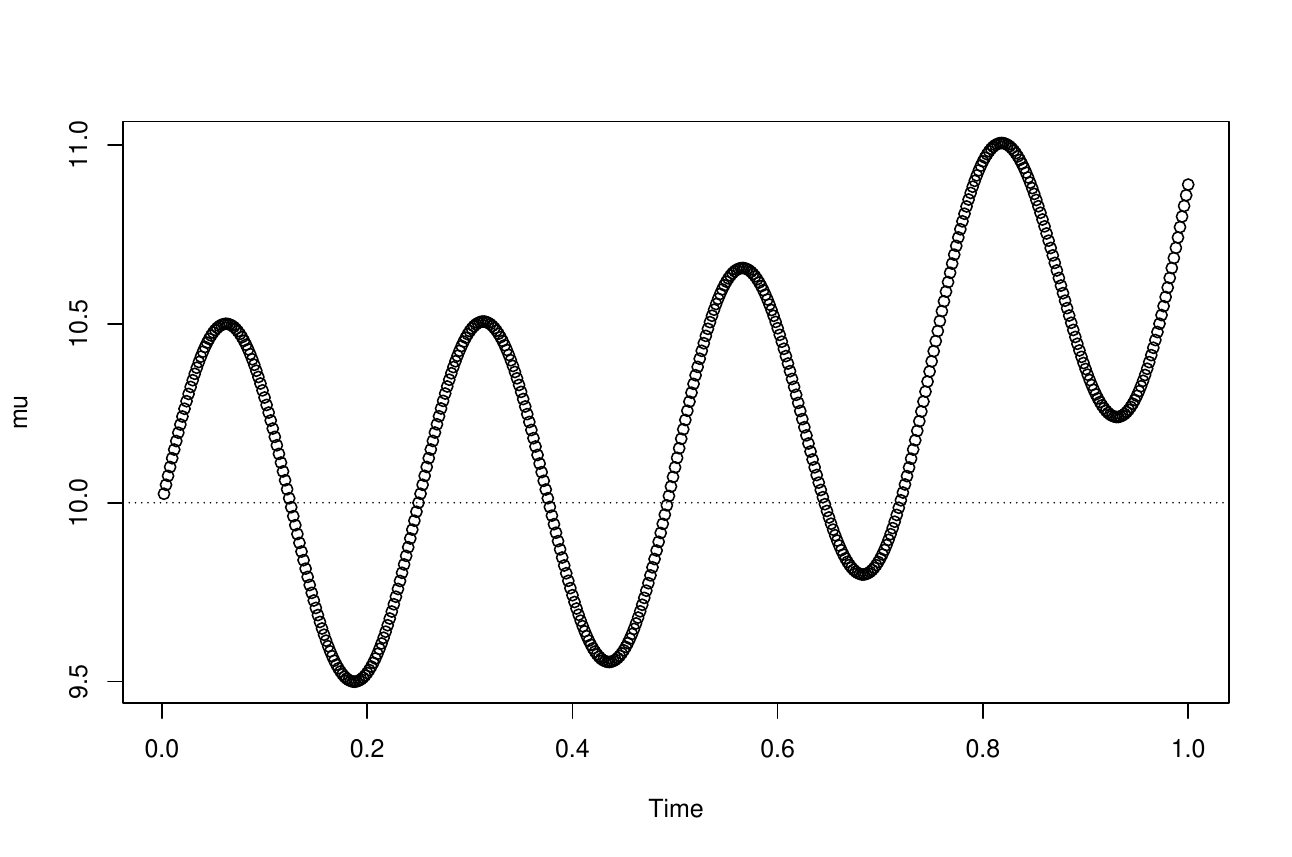}
  \caption{\it Plot of the regression function $\mu_a(x)$ in \eqref{meanfuncs} for $a=2$.  The dotted line is given by $\mu_0^{1/4} = 4\int_0^{1/4}\mu_2(s)ds$.}
  \label{figa}  
\end{center}
\end{figure}

\subsection{Synthetic Data}
We choose  $\Delta=1$ and  the mean function
\begin{align}
\label{meanfuncs}
    \mu_a(x)= 10+1/2\sin(8\pi x)+a\Big(x-\frac{1}{4}\Big)^2\1\Big\{x>\frac{1}{4}\Big\} , 
\end{align}
which is displayed in Figure \ref{figa} for $a=2$. We consider various choices of the parameter $a$ where we choose $t_0=1/4$  so that the hypotheses are given by
\begin{align}
\label{simhyp}
    H_0(1):d_\infty\leq 1 \quad vs \quad H_1(1): d_\infty>1
\end{align}
where
\begin{align}
    d_\infty=\sup_{t \in [1/4,1]}\Big| \mu_a(t)-4\int_0^{1/4}\mu_a(s)ds\Big|
\end{align}
Note that $d_\infty=1$ (boundary of the hypotheses)  for $a=\frac{128}{81}$ and that $d_\infty>1$ (alternative)   and $d_\infty<1$ (interior of the null hypothesis), whenever $a>\frac{128}{81}$ and  $ a<\frac{128}{81}$, respectively.  For the error processes $(\epsilon_i)_{i\in\Z}$ in model \eqref{model} we investigate the processes
\begin{align}
\label{indep}(\text{IID}) \quad &~ \epsilon_i= \tfrac{1}{2}\eta_i , \\
\label{MA}  (\text{MA)} \quad &~ \epsilon_i = \tfrac{1}{\sqrt{5}}\big( \eta_i + \tfrac{1}{2}\eta_{i-1}\big)\\
\label{AR} (\text{AR})  \quad &~ \epsilon_i = \tfrac{\sqrt{3}}{4} \big( \eta_i + \tfrac{1}{2}\epsilon_{i-1}\big),
\end{align}
where $(\eta_i)_{i\in\Z}$ is an i.i.d.\ sequence of standard normally distributed random variables. In particular, we have $\text{Var}(\epsilon_i)=\tfrac{1}{4}$ for all error processes under consideration. We will compare the novel testing procedures \eqref{test1} and \eqref{test2} proposed in this paper 
 with the most powerful test from \citep*{buecher21} which is given in equation (4.6) therein. Throughout this section  we generically choose $m=5$ (tuning parameter for the long run variance estimation)  and $c_n=20$ (lower bound for scales in the multi-scale statistic \eqref{teststat}).
The results are fairly stable under perturbation of these parameters as long as they are not chosen too small. 
The empirical rejection rates are calculated by  $1000$ simulation  runs. For the test \eqref{test1} we calculated the quantiles of the distribution of $ M$
by $1000$ samples from a Brownian motion sampled on a grid with width $0.001$. For the test \eqref{test2} we used $200$  samples to calculate the quantile $q_{1-\alpha^*}$ in \eqref{test1} for each of the $1000$ simulation runs.

The empirical rejection probabilities of all three tests are recorded  in Table \ref{tab1} and confirm the  asymptotic theory. Regarding the interpretation of the empirical findings we note that the null hypothesis  in \eqref{simhyp} is true, whenever the parameter $a$ satisfies $a\leq 128/81 \simeq 1.58$ and that we expect an increasing number of rejections for larger values of $a$, which yield increasing values for the difference $d_\infty-\Delta$.
Note that all three tests are conservative in the sense that the empirical size is smaller than $5\%$ at the boundary of the hypotheses \eqref{hypotheses} defined by $d_\infty-\Delta =0$  (in boldface).  However, it is worthwhile to mention that  the test \eqref{test2} provides a better approximation of  the nominal  level  than its competitors.  \\

{\footnotesize
\begin{table}[H]
\centering
			\begin{tabular}{l|c | ccc | ccc | ccc  }		\hline \hline
		$\mu_a$&test  &\multicolumn{3}{c|}{\eqref{test1} }&\multicolumn{3}{c|}{\eqref{test2} }&\multicolumn{3}{c|}{ \cite{buecher21}} \\
			$a$ &$d_\infty-\Delta$ & 200 & 500 & 1000 & 200 & 500 & 1000 & 200 & 500 & 1000  \\ 
		 \hline 
		\addlinespace[.2cm]
		\multicolumn{11}{l}{\quad\textit{Panel A: iid errors}} \\ 
		1.5 & -0.03& 0.0 &0.0 &0.0 &0.0 & 0.0 & 0.0 & 0.0 & 0.0 & 0.0  \\ 
		\bf 1.58 & \bf 0.00 & \bf 0.0  & \bf 0.1& \bf 0.2 &\bf 0.2 & \bf 1.5 & \bf 0.2 & \bf 0.0 & \bf 0.0 & \bf 0.2\\ 
		2.0 & 0.13 &  0.6& 2.5 & 3.2&5.5 & 14.1 & 15.4 & 0.0 & 3.3 & 23.1 \\ 
		2.5 & 0.29 & 10.4 & 42.6& 50.3&42.1 & 73.6 & 80.2 & 0.0 & 29.9 & 97.8 \\ 
		3.0 & 0.45 &  54.3& 93.8& 99.7&86.6 & 99.4 & 100 & 0.2 & 57.3 & 100  \\
	\addlinespace[.2cm]
		\multicolumn{11}{l}{\quad\textit{Panel B: MA errors}} \\ 
	1.5 & -0.03 &  0.1& 0.1& 0.2 &0.0 & 0.0 & 0.0 & 0.0 & 0.0 & 0.0 \\ 
	\bf 1.58 & \bf 0.0 & \bf 0.1  & \bf 0.6 &\bf 0.1&\bf 0.7 & \bf 2.7 & \bf 2.4 & \bf 0.0 & \bf 0.0 & \bf 0.3 \\ 
	2.0 & 0.13 & 1.2 & 5.1 & 4.3& 6.3 & 14.9 & 16.1 & 0.0 & 3.7 & 18.7 \\ 
	2.5 & 0.29 &  10.3& 31.0& 47.1&26.7 & 56.8 & 75.6 & 0.2 & 27.0 & 87.9\\ 
	3.0 & 0.45 &  44.9& 81.4& 96.6&69.7 & 94.5 & 99.7 & 0.5 & 52.8 & 99.7 \\ 
 \addlinespace[.2cm]
		\multicolumn{11}{l}{\quad\textit{Panel C: AR errors}} \\ 
	1.5 & -0.03 &  0.7& 1.1& 1.1&0.0 & 0.0 & 0.0 & 0.1 & 0.5 & 1.0\\ 
	\bf 1.58 & \bf0.00  & \bf 0.8 & \bf 1.1&\bf 2.0 &\bf 2.6 & \bf 5.9 & \bf 6.7 & \bf 0.1 & \bf 1.4 & \bf 1.5 \\ 
	2.0 & 0.13 & 3.1 & 9.4 & 10.7&7.8 & 22.6 & 27.0 & 0.0 & 7.8 & 23.1 \\ 
	2.5 & 0.29 &  15.6& 36.1& 51.1&29.6 & 60.6 & 75.1 & 0.4 & 27.3 & 77.7\\ 
	3.0 & 0.45 &  43.5& 81.2  & 95.9&67.3 & 91.7 & 99.5 & 1.1 & 53.9 & 98.4 \\ \hline \hline
	\end{tabular}  \smallskip
\caption{\it Empirical rejection rates of the tests \eqref{test1} and \eqref{test2} and the test  proposed in equation (4.6) of \citep*{buecher21} for the hypotheses  \eqref{simhyp}. Different values for the parameter~$a$ in the mean function \eqref{meanfuncs},  error processes, and sample sizes $n=200, 500, 1000$ are considered.  }  
\label{tab1}
\end{table}
}

A comparison of the power properties of the different procedures shows that the conservative multiscale test \eqref{test1} outperforms the test in \citep*{buecher21} for moderates samples sizes ($n=200,500$), but the last-named test yields larger rejection probabilities if the sample  size $n=1000$, in particular if the deviation is $d_\infty-\Delta =2$.
On the other hand 
the multi-scale test \eqref{test2} 
yields an even larger power for sample size $n=200, 500$,  and for  $n=1000$ it shows a similar performance as the procedure in \citep*{buecher21}.

Finally,  we present a brief simulation study to investigate the performance of the estimator \eqref{Def:LocEst}  for the first time of a relevant deviation as defined in \eqref{det000}, where we use  $m=5$ and $c_n=20+n^{1/2}$. We consider the mean function  \eqref{meanfuncs}  with   $a=2$, thus the time of a first relevant deviation is $t^*=0.791$. 
The error process is given by \eqref{AR} and the  sample size $n=500$. 
In Figure \ref{Fig:LocHists} we display histograms 
of the estimator \eqref{Def:LocEst} and the estimator proposed in equation (5.2)
of  \citep*{buecher21} based on $10.000$ simulation runs.
We observe  that the   estimator introduced in \citep*{buecher21} 
does not detect a relevant deviation 
in more than 90\% of all cases for many of the considered settings while our estimator detects such a deviation almost always. However, due to the noise of the error process there exist also  cases where $\hat t$  underestimates the true point $t^*$ 
and delivers an estimate for the local maximum  of the function $\mu$ at $t=0.57$ which is the closest local peak to the time $t=0.8$ (here the deviation is $|\mu (0.57) - \mu_0^{1/4}| \simeq 0.695 < 1$.  

We  therefore refrain from a direct comparison  of the tow estimators and display 
in Table \ref{Fig:LocEst} the empirical bias, standard deviation and the detection rate of the estimator \eqref{Def:LocEst} proposed in this paper.   
\begin{figure}[H]
    \begin{center}
  \includegraphics[width=0.45\textwidth]{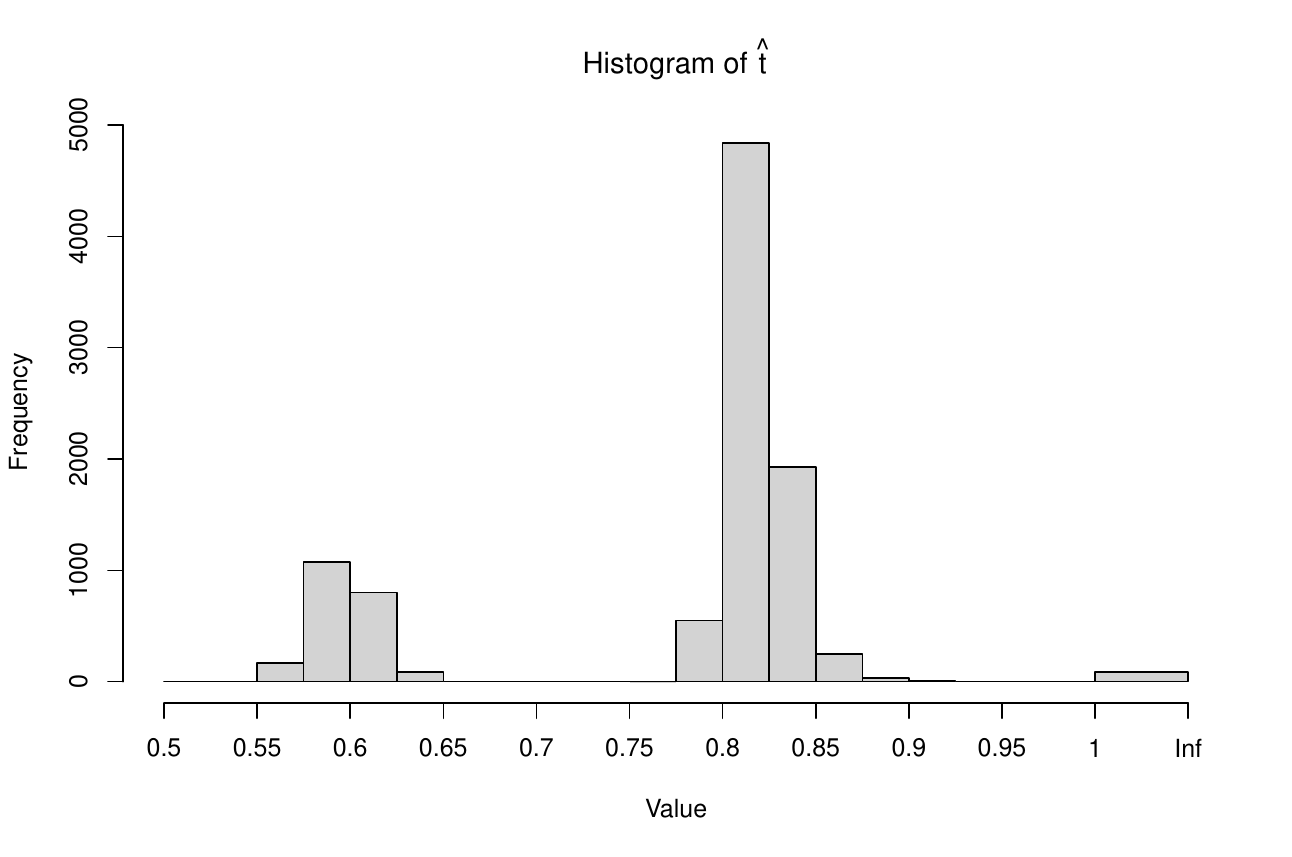}
  \includegraphics[width=0.45\textwidth]{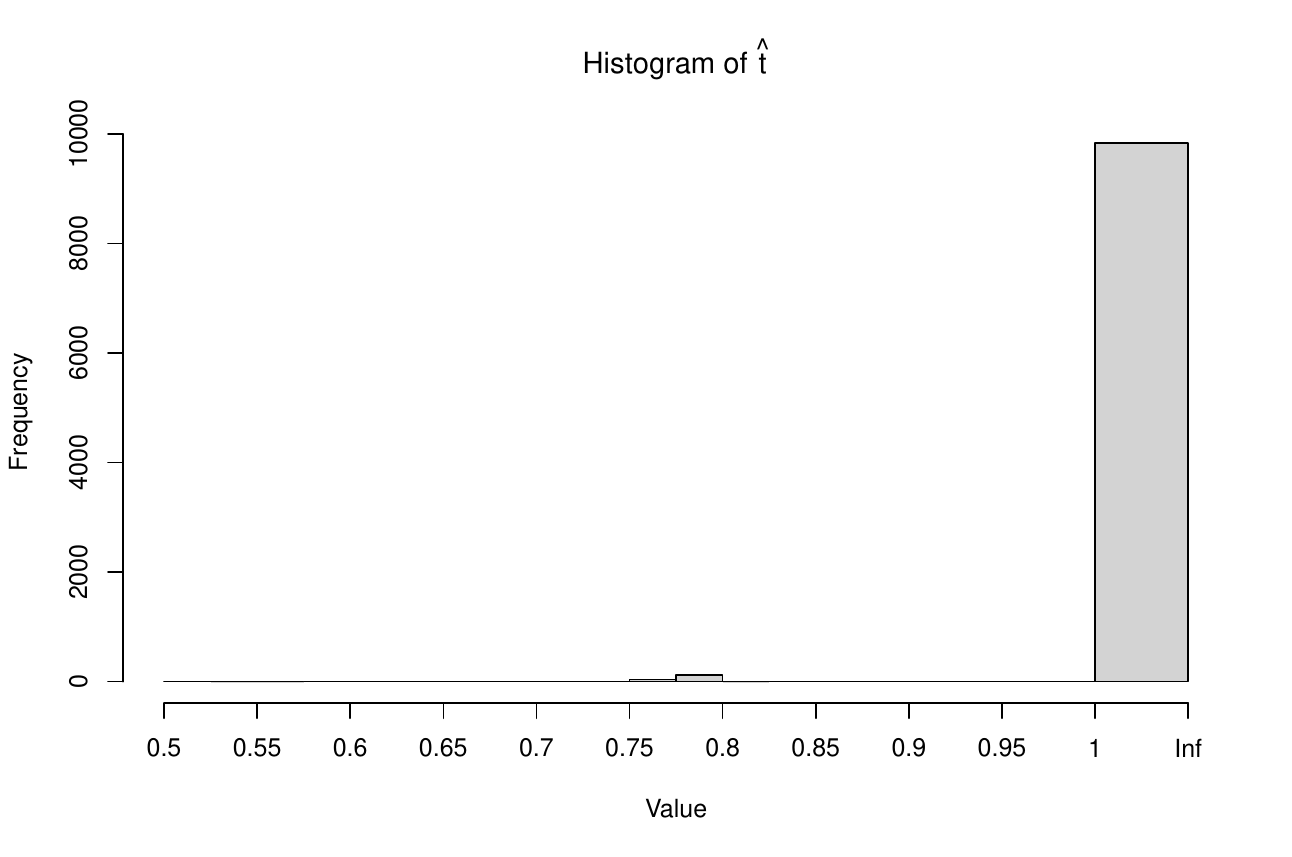}
  \caption{\it  Histograms for the estimator $\hat t$ defined  in \eqref{Def:LocEst} (left) and the estimator proposed in equation (5.2) of  \citep*{buecher21} (right).  
  \label{Fig:LocHists} }  
\end{center}
\end{figure}

Note that all choices of  the parameter $a$ in Table \ref{Fig:LocEst} correspond to  a violation of the null hypotheses. Therefore,  a ``good'' estimator of the time point of the first relevant deviation should  be finite in as many cases as possible. We observe that the estimator \eqref{Def:LocEst} proposed in this paper is finite almost always for all choices of $a$ even for the sample size $n=200$. 
The bias and standard deviation of the estimator first increase when the sample size increases to 500. This is due to the method sometimes detecting a change at the second highest peak of the the curve (see Figure \eqref{Fig:LocHists}) when the sample size becomes larger.
This happens more rarely when the sample size grows further, as is reflected by a lower bias for $n=1000$ compared to $n=500$. The estimator generally performs better for less dependent data but nonetheless performs well across all settings.

\begin{table}[H]
    \centering
    \begin{tabular}{lccc ccc}
        \toprule
        &  \multicolumn{3}{c}{\(\hat{E}[t^*]\)} & \multicolumn{3}{c}{\(\hat{E}[1(t^* = \infty)]\)} \\
        \cmidrule(lr){2-4} \cmidrule(lr){5-7}
        \(n \setminus a\)  & 2.0 & 2.5 & 3.0 & 2.0 & 2.5 & 3.0 \\
        \midrule
        \multicolumn{7}{l}{\textit{Panel A: iid errors}} \\        
        200 &  0.864 (0.029) &  0.839 (0.036) & 0.812 (0.052) & 0.031 & 0.002  & 0.000\\
        
        500 &  0.809 (0.044) & 0.780 (0.069) &  0.738 (0.090)& 0.001 & 0.000& 0.000  \\
        
        1000 &  0.803 (0.023)   & 0.783 (0.046)& 0.749 (0.076) &  0.000 & 0.000 &  0.000 \\
        \midrule
        \multicolumn{7}{l}{\textit{Panel B: MA errors}} \\
        
        200  &  0.854 (0.054) & 0.826 (0.063) & 0.794 (0.074) & 0.057 & 0.029 & 0.000  \\
        
        500  &  0.788 (0.078) & 0.754 (0.093) & 0.711 (0.101) & 0.001 & 0.000 & 0.000\\
       
        1000 & 0.789 (0.057) & 0.757 (0.080) & 0.715 (0.095) & 0.000 & 0.000 & 0.000 \\
        \midrule
        \multicolumn{7}{l}{\textit{Panel C: AR errors}} \\
        
        200  & 0.835 (0.084) & 0.810 (0.087) & 0.778 (0.090) & 0.094 & 0.011 & 0.001 \\
        
        500  & 0.764 (0.108) & 0.727 (0.116) & 0.693 (0.111) &0.009  & 0.001 & 0.000 \\
        
        1000 & 0.767 (0.087) & 0.731 (0.099) & 0.692 (0.104) & 0.002 & 0.000 & 0.000\\
        \midrule
        \( t^* \) & 0.79 & 0.78 & 0.77 & & & \\
        \bottomrule
    \end{tabular}
     \caption{\it Empirical bias, standard deviation and detection rate of the estimator  \( \hat t \) defined in \eqref{Def:LocEst}. Central part: empirical mean and standard deviation (in brackets) of \( \hat t \), conditional on \( t^* \neq \infty \). Right part: proportion of cases for which \( \hat t = \infty \). Last line: true change point \( t^* \). }
    \label{Fig:LocEst}
\end{table}

\subsection{Real Data Application}

We consider the mean of daily minimal temperatures (in degree Celsius) over the month of July for different weather stations in Australia.  The data set is available via the R package fChange \citep[see][]{Sonmez} on Github and the sample size varies between $100$ and $150$, depending on the weather station.
For each weather station we test the hypotheses \eqref{hypotheses} for  different thresholds $\Delta \in \{0.5,1,1.5\}$, where $t_0$ is chosen such  that the years $1,...,nt_0$ correspond to the time frame until the year $1950$. In Table \ref{tab:app} we  record the $p$-values of the multi-scale test \eqref{test2} and the test proposed in equation (4.6) of  \citep*{buecher21}. For the test \eqref{test2} these  $p$-values are calculated by $1000$  bootstrap repetitions, while the parameters $m$ and $c_n$ are  chosen as in the previous section, that is $c_n=20$ and $m=5$.

{\small
\begin{table}[H]
\centering
	\begin{tabular}{l | ccc | ccc  } \hline \hline
		$\Delta$ & \multicolumn{1}{c}{0.5} & \multicolumn{1}{c}{1.0} & \multicolumn{1}{c|}{1.5} & \multicolumn{1}{c}{0.5} & \multicolumn{1}{c}{1.0} & \multicolumn{1}{c}{1.5} \\
		\hline
        \hline
        test & \multicolumn{3}{c|}{\cite{buecher21}} & \multicolumn{3}{c}{ \eqref{test2}}\\
        \hline
        \hline
		Boulia/p-value & 29.0 & 73.1 & 98.0 & \textbf{0.0} & \textbf{0.8}& 37.7 \\
        Boulia/year &- & - & - & 1950& 1956 & -\\
        \hline
		Cape Otway/p-value & 11.4 & 98.0 & 100.0 & 7.4 & 78.8 &  99.9 \\
        Cape Otway/year & - & - & - & - & - & - \\
        \hline
		Gayndah/p-value & \textbf{0.2} & \textbf{1.4} & \textbf{3.2} & \textbf{0.0} & \textbf{0.0}& \textbf{0.3} \\
        Gayndah/year & 1952 & 1968 & 1974 & 1950 & 1950 & 1984\\
                \hline
		Gunnedah/p-value & \textbf{0.8} & \textbf{2.7} & \textbf{10.3} & \textbf{0.0} & \textbf{0.0} & \textbf{1.4}\\
        Gunnedah/year & 1952 & 1955 & - & 1962 & 1973 & 1984\\ 
        \hline
		Hobart/p-value & 94.9 & 100.0 & 100.0 & 34.1  & 95.5 & 100.0 \\
        Hobart/year & - & - & - & - & - & - \\
        \hline
		Melbourne/p-value & \textbf{0.0} & \textbf{1.1} & 27.3 & \textbf{0.0} & \textbf{0.0}& \textbf{3.7} \\
        Melbourne/year & 1968 & 1976 & - & 1950 & 1972 & 1990\\
        \hline
		Robe/p-value & \textbf{3.3} & 44.9 & 98.2 & 6.1 & 68.3 & 99.8 \\
        Robe/year & 1953 &- &- & - & - & - \\
              \hline
		Sydney/p-value & 41.1 & 98.1 & 100 & \textbf{0.05}& 0.14 &  89.9 \\	
        Sydney/year & - & - & - & 1950 & - & - \\
		\hline \hline
	\end{tabular} \smallskip
	\caption{\it $p$-values and estimates for $t^*$ of the test in \cite{buecher21} (left part) and the 
    bootstrap test \eqref{test2}  proposed in this paper (right part) for the hypotheses \eqref{hypotheses}   for various values of the threshold  $\Delta$. }
	 	\label{tab:app}
\end{table}
}

Except for the Robe weather station the $p$ values of the multi-scale test test \eqref{test2} are either similar  or substantially smaller  than the $p$ values obtained by  the test in \citep*{buecher21}. In particular the new test detects changes in Boulia, Melbourne and Sydney that the procedure from \citep*{buecher21} was not able to identify. We also observe that in general the new estimator $\hat t$  proposed in this paper generally dates deviations earlier than its counterpart from \citep*{buecher21}.
The only exception is  the station Robe, where the test from \citep*{buecher21} detects a difference of at least $0.5$ degrees Celsius  that our method does not detect at significance level $\alpha=0.05$. However, a more precise look at the minimum value  $\hat \Delta_\alpha$,  for which $H_0(\Delta)$ is not rejected at a controlled type I error $\alpha$ (see equation \eqref{deltahat} and equation (5.2) in \citep*{buecher21})
shows  that the difference between the two tests are small: the method proposed in  \citep*{buecher21} gives $\hat \Delta_{0.05} = 0.53 $, while the estimator \eqref{deltahat} yields
$\hat \Delta_{0.05} = 0.49$.

These values are taken from 
Table \ref{tab:delta}, which 
displays the values $\hat \Delta_{0.05}$ for both methods (here we have recalculated the results of the test of \citep*{buecher21}). The results further confirm the previous findings. Except for the weather station in Robe the test \eqref{test2} always detects larger differences than the test  proposed in \citep*{buecher21}. In particular we are able to detect changes in Boulia and Hobart where the method  \citep*{buecher21} does not detect any  relevant deviation. While at Hobart   the difference in the value $\hat \delta_{0.05}$ is small, it is larger than $1$ degree Celsius at Boulia.

{\small 
\begin{table}[H]
\centering
	\begin{tabular}{l | c |c |c | c |c |c | c | c  } \hline \hline
		$\hat \Delta_{0.05} $& Boulia & Cape Otway & Gayndah & Gunnedah & Hobart & Melbourne & Robe &Sydney \\
		\hline
        \hline
		BüDH & 0.00 & 0.38 & 1.69 & 1.22 & 0.00 & 1.17 & 0.53 & 0.13 \\
        BaD & 1.16 & 0.45 & 1.74 & 1.64 & 0.17 & 1.52 & 0.49 & 0.87\\
		\hline \hline
	\end{tabular} 
    \smallskip
	\caption{\it The minimum value  $\hat \Delta_{0.05}$,  for which $H_0(\Delta)$ is not rejected at a controlled type I error of $5\%$ (see equation \eqref{deltahat} and equation (5.2) in \cite{buecher21}). }
	 	\label{tab:delta}
\end{table}
}

\textbf{Acknowldgements: } This research was partially funded in the course of TRR 391 Spatio-temporal Statistics for the Transition of Energy and Transport (520388526) by the Deutsche Forschungsgemeinschaft (DFG, German Research Foundation).

\pagebreak

\section{Proofs}

   \def\theequation{5.\arabic{equation}}	
   \setcounter{equation}{0}

\label{proofs}

Throughout this section we use the  notation $k_0=\lfloor nt_0\rfloor$ and define for any constant $a>0$ the sets
\begin{align}
    \mathcal{E}_c(a)&=\Big\{ (j,k) \in \{k_0,...,n\}^2 \Big| k-j=c, \Big|\mu_0^{t_0}-\mu_{j/n}^{k/n}\Big|\geq d_\infty-a\log(n)/\sqrt{c} \Big\}
     \label{hd12a} \\
    \mathcal{E}_c^{co}(a)&=\Big\{ (j,k) \in \{k_0,...,n\}^2 \Big| k-j=c\Big\} \setminus \mathcal{E}_c (a)~.
    \label{hd12b}
\end{align}
We will generally suppress the dependence on $a$ in the notation except for the subsection about the bootstrap procedure. In the following we shall assume that $\sigma=1$, the general case follows by simple rescaling.

\subsection{Proof of Theorem \ref{Thm:Nulldist}} \label{sec51}

Define
\begin{align}
    \bar \mu_j^k&=\frac{1}{k-j}\sum_{i=j+1}^k\mu(i/n)    
\end{align}
and denote  by $\check B(s)=n^{-1/2}B(ns)$ a rescaled version of the Brownian motion $B$. We will start stating four auxiliary results, which will be used in the proof of Proof of Theorem \ref{Thm:Nulldist} and  of other results.  The proofs are given at  the end of this section.

\begin{Lemma}
\label{intapprox}
Grant assumption (A2). It then holds that
    \begin{align}
        \sup_{1 \leq j<k\leq n}(k-j)(\bar \mu_j^k-\mu^{k/n}_{j/n})=O(1)
    \end{align}      
    
\end{Lemma}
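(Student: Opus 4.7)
\medskip

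\emph{Plan of proof.} The quantity $(k-j)(\bar\mu_j^k - \mu_{j/n}^{k/n})$ is a Riemann sum minus its integral counterpart, so it should be controlled by a standard midpoint/endpoint quadrature estimate combined with boundary effects from the finitely many jumps of $\mu$. First I would rewrite
\begin{align}
(k-j)\big(\bar \mu_j^k-\mu_{j/n}^{k/n}\big)
= \sum_{i=j+1}^k \Big( \mu(i/n) - n\!\int_{(i-1)/n}^{i/n}\!\mu(x)\,dx \Big),
\end{align}
which expresses the error as a sum of $k-j$ ``local'' quadrature errors, each comparing $\mu$ at the right endpoint with the average of $\mu$ over the preceding small interval.

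Next, I would split the summation index set
$I=\{j+1,\dots,k\}$ into two parts. Let $J_1,\dots,J_N$ be the (finitely many, by (A2)) jump locations of $\mu$, and set
\begin{align}
I_{\mathrm{good}} = \big\{ i\in I \,:\, [(i-1)/n,i/n] \text{ contains no }J_r\big\},
\quad
I_{\mathrm{bad}} = I\setminus I_{\mathrm{good}}.
\end{align}
Note that $|I_{\mathrm{bad}}|\le N$ uniformly in $j,k,n$. On $I_{\mathrm{good}}$ the function $\mu$ is Lipschitz on $[(i-1)/n,i/n]$ with some universal Lipschitz constant $L$ (the maximum of the Lipschitz constants of the finitely many smooth pieces), so
\begin{align}
\Big| \mu(i/n) - n\!\int_{(i-1)/n}^{i/n}\!\mu(x)\,dx \Big|
\le n\!\int_{(i-1)/n}^{i/n}\!L\,(i/n-x)\,dx
= \frac{L}{2n}.
\end{align}
Summing gives $\sum_{i\in I_{\mathrm{good}}}\!(\cdots)\le L(k-j)/(2n)\le L/2$, since $k-j\le n$.

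For indices in $I_{\mathrm{bad}}$, $\mu$ is merely bounded on the relevant interval, hence each summand is at most $2\|\mu\|_\infty$, yielding $\sum_{i\in I_{\mathrm{bad}}}(\cdots) \le 2N\|\mu\|_\infty$. Combining the two estimates produces a bound independent of $j,k,n$, which is precisely the uniform $O(1)$ claim. The only minor obstacle is keeping track of the ``bad'' intervals that straddle a jump, but since $\mu$ has only finitely many jumps this contributes only a fixed constant; the rest is a routine Lipschitz quadrature estimate.
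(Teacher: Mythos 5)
Your proof is correct and takes essentially the same approach as the paper: rewrite the discrepancy as a sum of local quadrature errors $\mu(i/n)-n\int_{(i-1)/n}^{i/n}\mu$, bound the terms away from jumps by Lipschitz continuity (giving $O(n^{-1})$ per term, hence $O(1)$ over at most $n$ terms), and bound the finitely many jump-straddling terms by $2\|\mu\|_\infty$. Your explicit split into $I_{\mathrm{good}}$ and $I_{\mathrm{bad}}$ simply spells out the bookkeeping that the paper compresses into the remark about the ``general case''.
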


\begin{Lemma}
\label{Lemma:Upper}
If  $d_\infty>0$ and  assumption   (A1) and (A2) are satisfied,  we have with high probability that
    \begin{align}
         \hat  T_n \leq  \sup_{\substack{c_n \leq c \leq n-k_0 \\ c \in \N }}\sup_{(j,k) \in \mathcal{E}_c}\mathfrak{s}(j/n,k/n)\Big(\sqrt{c}\frac{B(k_0)}{k_0}-\frac{B(k)-B(j)}{\sqrt{c}}\Big)- \Gamma_n(c)\Big)
    \end{align}
    where $B$ is a Brownian motion with variance $\sigma^2$ and suprema over empty sets are defined as  as $-\infty$.
\end{Lemma}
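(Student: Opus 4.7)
\textbf{Plan for the proof of Lemma \ref{Lemma:Upper}.} The strategy is to decompose the outer supremum defining $\hat T_n$ into contributions over the active set $\mathcal{E}_c$ and its complement $\mathcal{E}_c^{co}$, show that on $\mathcal{E}_c$ the statistic is bounded above by the stated Brownian expression up to a vanishing error, and that on $\mathcal{E}_c^{co}$ the contribution diverges to $-\infty$ in probability and so cannot affect the upper bound.

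The first step I would carry out is the uniform Gaussian approximation
\[
\sqrt{c}\big(\hat\mu_0^{k_0} - \hat\mu_j^k\big) = \sqrt{c}\big(\mu_0^{t_0} - \mu_{j/n}^{k/n}\big) + \sqrt{c}\,\frac{B(k_0)}{k_0} - \frac{B(k)-B(j)}{\sqrt{c}} + o(1),
\]
valid uniformly over $c \in [c_n, n-k_0]$ and $(j,k)$ with $k-j=c$. This combines assumption (A1), whose strong invariance residual contributes $O(n^{1/2-p}/\sqrt{c})=o(1)$ by \eqref{p1}, with Lemma \ref{intapprox}, which after scaling by $\sqrt{c}$ contributes at most $O(1/\sqrt{c})=o(1)$. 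A Dümbgen--Spokoiny style multi-scale maximal inequality for Brownian increments then yields
\[
\sup_{c \geq c_n}\sup_{k-j=c}\Big(\tfrac{|B(k)-B(j)|}{\sqrt{c}} - \Gamma_n(c)\Big) = O_\p(1),\qquad \sqrt{c}\,\tfrac{|B(k_0)|}{k_0} = O_\p(1),
\]
and hence a uniform $O_\p(\sqrt{\log n}/\sqrt{c})$ fluctuation bound on $\hat\mu_j^k - \mu_{j/n}^{k/n}$.

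On $\mathcal{E}_c$ the signal $|\mu_0^{t_0}-\mu_{j/n}^{k/n}|$ is at least $d_\infty - a\log(n)/\sqrt{c}$, which stays uniformly above $d_\infty/2>0$ because \eqref{p1} forces $c_n$ to be at least a polynomial power of $n$ and in particular $c_n\gg \log n$. The $O_\p(\sqrt{\log n}/\sqrt{c})$ fluctuation is then uniformly smaller than this signal, so the sign of $\hat\mu_0^{k_0}-\hat\mu_j^k$ matches $\mathfrak{s}(j/n,k/n)$ with probability tending to one across $\mathcal{E}_c$. Substituting the approximation into $\sqrt{c}|\hat\mu_0^{k_0}-\hat\mu_j^k| = \mathfrak{s}(j/n,k/n)\sqrt{c}(\hat\mu_0^{k_0}-\hat\mu_j^k)$ and bounding $|\mu_0^{t_0}-\mu_{j/n}^{k/n}|\leq d_\infty$ delivers the claimed upper bound on the active part. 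On $\mathcal{E}_c^{co}$ the signal is strictly smaller than $d_\infty - a\log(n)/\sqrt{c}$, so substituting the approximation and invoking the multi-scale bound yields
\[
\sqrt{c}\big|\hat\mu_0^{k_0}-\hat\mu_j^k\big| - \sqrt{c}\,d_\infty - \Gamma_n(c) \leq -a\log(n) + O_\p(1),
\]
which tends to $-\infty$ in probability and is dominated by the $O_\p(1)$-bounded supremum over $\mathcal{E}_c$, hence does not contribute with high probability.

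The main obstacle is the uniform sign-matching step on $\mathcal{E}_c$: one must control the strong invariance residual together with the Gaussian multi-scale fluctuation simultaneously across all scales $c_n\leq c \leq n-k_0$. This is exactly where the growth condition $n^{1-2p}/c_n = o(1)$ is used essentially --- it ensures that $c_n$ exceeds both the $\sqrt{\log n}$ scale of the Gaussian fluctuations and the $n^{1/2-p}$ scale of the invariance residual, so that all stochastic errors are of strictly smaller order than the fixed signal $d_\infty>0$, giving a uniform sign-matching event of probability tending to one.
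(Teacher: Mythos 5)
Your proposal is correct and follows essentially the same route as the paper's argument: expand $\sqrt{c}\,(\hat\mu_0^{k_0}-\hat\mu_j^k)$ via the strong invariance principle (A1) and Lemma~\ref{intapprox}, use a multiscale maximal inequality (the paper cites Frick et al.\ 2014, Section~2.2, where you invoke D\"umbgen--Spokoiny; these deliver the same $O_\p(1)$ bound on the penalised Brownian supremum), discard the complement $\mathcal{E}_c^{co}$ because its contribution is $\lesssim -\log n$, and on $\mathcal{E}_c$ remove the absolute value by sign-matching before dropping the nonpositive term $\sqrt{c}\,(|\mu_0^{t_0}-\mu_{j/n}^{k/n}| - d_\infty)\le 0$. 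The only difference is one of exposition: you spell out the sign-matching step and the role of $n^{1-2p}/c_n=o(1)$ more explicitly than the paper does, and you write the Gaussian approximation with the signal entering as $+\sqrt{c}(\mu_0^{t_0}-\mu_{j/n}^{k/n})$, which is the sign that actually comes out of the decomposition $\hat\mu = \bar\mu + \text{noise}$ and is the one needed for $\mathfrak s$ to cancel the absolute value; the paper's displayed intermediate step carries the opposite sign, which appears to be a typo that does not affect the final bound.
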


\begin{Lemma}
\label{Lemma:epslimit}
If  $d_\infty>0$ and  Assumption   (A1) and (A2) aresatisfied,  we have
    \begin{align}
       &\sup_{\substack{c_n \leq c \leq n-k_0 \\ c \in \N }}\sup_{(j,k) \in \mathcal{E}_c}\mathfrak{s}(j/n,k/n)\Big(\sqrt{c}\frac{B(k_0)}{k_0}-\frac{B(k)-B(j)}{\sqrt{c}}\Big)- \Gamma_n(c)\\
       \leq&\lim_{\epsilon \downarrow 0}\sup_{(s,t) \in \mathcal{A}_{\epsilon,d_\infty}}\mathfrak{s}(s,t)\Big(\sqrt{t-s}\frac{\check B(t_0)}{t_0}-\frac{\check B(t)-\check B(s)}{\sqrt{t-s}}\Big)- \Gamma(t-s)+o_{a.s.}(1) , 
    \end{align}
    where 
    \begin{align}
        \mathcal{A}_{\epsilon,d_\infty}=\Big\{ (s,t) \in [t_0,1]^2 \Big| s<t, \Big|\mu_0^{t_0}-\mu_s^t\Big|\geq d_\infty-\epsilon\Big\}
    \end{align}
\end{Lemma}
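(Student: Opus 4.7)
\emph{Plan.} The strategy is to rewrite the discrete supremum in terms of the rescaled Brownian motion $\check B$, embed the discrete index set $\mathcal{E}_c$ into $\mathcal{A}_{\epsilon_n, d_\infty}$ for a suitably chosen null sequence $\epsilon_n \downarrow 0$, and finally pass to the limit using monotonicity of $\epsilon \mapsto \sup_{\mathcal{A}_{\epsilon, d_\infty}}[\,\cdot\,]$.

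First I would parametrise $(j,k)\in\mathcal{E}_c$ by $s=j/n,\ t=k/n$, so that $t-s=c/n$. The Brownian scaling $B(\ell)=\sqrt{n}\,\check B(\ell/n)$ produces the exact identities
\begin{align*}
\frac{B(k)-B(j)}{\sqrt{c}}=\frac{\check B(t)-\check B(s)}{\sqrt{t-s}},\qquad \sqrt{c}\,\frac{B(k_0)}{k_0}=\sqrt{t-s}\,\frac{\check B(k_0/n)}{k_0/n},\qquad \Gamma_n(c)=\Gamma(t-s),
\end{align*}
and at grid points $\mathfrak{s}(j/n,k/n)$ coincides with the continuous sign from \eqref{det3}. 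Because $|k_0/n-t_0|\leq 1/n$, the uniform modulus of continuity of the Brownian trajectory yields $|\check B(k_0/n)-\check B(t_0)|=o_{a.s.}(1)$, so replacing $\check B(k_0/n)/(k_0/n)$ by $\check B(t_0)/t_0$ adds only an $o_{a.s.}(1)$ error uniformly over the supremum (the prefactor $\sqrt{t-s}\leq 1$ is harmless).

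Next I set $\epsilon_n := a\log(n)/\sqrt{c_n}$, which tends to $0$ because \eqref{p1} together with $p>0$ forces $c_n\gg n^{1-2p}\gg\log^2 n$. For every $c\geq c_n$ and $(j,k)\in\mathcal{E}_c$ the defining inequality $|\mu_0^{t_0}-\mu_{j/n}^{k/n}|\geq d_\infty-a\log(n)/\sqrt{c}\geq d_\infty-\epsilon_n$ places $(j/n,k/n)$ in $\mathcal{A}_{\epsilon_n,d_\infty}$; the single exceptional index $j=k_0$ for which $j/n\leq t_0$ is handled by the piecewise Lipschitz continuity of $\mu$ (Assumption (A2)), which gives $|\mu_{k_0/n}^{k/n}-\mu_{t_0}^{k/n}|=O(1/n)$ and is absorbed into a slightly inflated $\epsilon_n$. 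Consequently, on the event that the uniform modulus bound holds, the discrete supremum is dominated by the continuous supremum over $\mathcal{A}_{\epsilon_n,d_\infty}$ plus an $o_{a.s.}(1)$ term.

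Finally, the sets $\mathcal{A}_{\epsilon,d_\infty}$ are nested and increasing in $\epsilon$, so $\epsilon\mapsto\sup_{\mathcal{A}_{\epsilon,d_\infty}}[\,\cdot\,]$ is non-increasing as $\epsilon\downarrow 0$; hence pathwise in the Brownian motion, $\sup_{\mathcal{A}_{\epsilon_n,d_\infty}}\to\lim_{\epsilon\downarrow 0}\sup_{\mathcal{A}_{\epsilon,d_\infty}}$ along the deterministic sequence $\epsilon_n\downarrow 0$ by the monotone-limit principle, which together with the previous paragraph yields the claimed inequality. \emph{Main obstacle.} The delicate point is coordinating the rate of $\epsilon_n\downarrow 0$: it must be at least of order $\log(n)/\sqrt{c_n}$ to absorb the built-in slack in the definition of $\mathcal{E}_c$, yet still vanish; that both constraints are simultaneously satisfiable is exactly the content of condition \eqref{p1}. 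The discretisation onto the $1/n$-grid and the boundary mismatch at $t_0$ are comparatively minor, being controlled by Lipschitz continuity of $\mu$ and continuity of the Brownian path.
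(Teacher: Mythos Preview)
Your proposal is correct and follows essentially the same approach as the paper: the paper's proof also rests on the observation that $\bigcup_{c_n\leq c\leq n-k_0}\mathcal{E}_c$ is, for any fixed $\epsilon>0$, eventually contained in $\mathcal{A}_{\epsilon,d_\infty}$, together with the monotonicity of $\epsilon\mapsto\sup_{\mathcal{A}_{\epsilon,d_\infty}}[\,\cdot\,]$ (bounded above by $M$). Your version is more explicit in two places---the exact rescaling identities linking $B$ to $\check B$ and $\Gamma_n$ to $\Gamma$, and the $\check B(k_0/n)/(k_0/n)$ versus $\check B(t_0)/t_0$ correction---and you phrase the embedding via a concrete null sequence $\epsilon_n=a\log(n)/\sqrt{c_n}$ rather than the ``for every fixed $\epsilon$'' formulation, but these are expository choices within the same argument.
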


\begin{Lemma}
\label{Lemma:Lower}
 If $d_\infty>0$ and Assumption   (A1) and (A2) are satisfied, we have 
\begin{align}
    &\lim_{\epsilon \downarrow 0}\sup_{(s,t) \in \mathcal{A}_{\epsilon,d_\infty}}\mathfrak{s}(s,t)\Big(\sqrt{t-s}\frac{\check B(t_0)}{t_0}-\frac{\check B(t)-\check B(s)}{\sqrt{t-s}}\Big)- \Gamma(t-s) \\
    \leq&\sup_{\substack{c_n \leq c \leq n-k_0 \\ c \in \N }}\sup_{(j,k) \in \mathcal{E}_c}\mathfrak{s}(j/n,k/n)\Big(\sqrt{c}\frac{B(k_0)}{k_0}-\frac{B(k)-B(j)}{\sqrt{c}}-\sqrt{c}\Big(\mu_0^{t_0}-\mu_{j/n}^{k/n}\Big)\Big)\\
         & \qquad \qquad \qquad \qquad \qquad - \Gamma_n(c)-\sqrt{c}d_\infty\Big)+o_\p(1)\\
         =&\hat  T_n+o_\p(1)
\end{align}    
\end{Lemma}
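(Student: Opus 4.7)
The plan is to establish the two components of the lemma separately: a \emph{discretization} bound showing that the continuous supremum on the left is controlled by the discrete stochastic supremum over $\mathcal{E}_c$, and a \emph{rewriting} of $\hat T_n$ as that same discrete supremum up to $o_\p(1)$. I would use the normalization $\sigma=1$ adopted at the start of Section \ref{proofs}.

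For the rewriting step I would combine the strong approximation from (A1) with Lemma \ref{intapprox} to obtain
\[
    \hat\mu_0^{k_0} - \hat\mu_j^k = (\mu_0^{t_0} - \mu_{j/n}^{k/n}) + \tfrac{B(k_0)}{k_0} - \tfrac{B(k)-B(j)}{c} + O\bigl(c^{-1/2-p} + k_0^{-1/2-p} + c^{-1}\bigr)
\]
uniformly in $(j,k)$ with $c=k-j\geq c_n$. Multiplying by $\sqrt{c}$ and invoking \eqref{p1} yields an $o(1)$ error, since $\sqrt{c}\,c^{-1/2-p}=c^{-p}\to 0$ and $\sqrt{c}/k_0^{1/2+p}=O(n^{-p})\to 0$. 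On $\mathcal{E}_c$ the mean difference $|\mu_0^{t_0}-\mu_{j/n}^{k/n}|$ stays within $a\log(n)/\sqrt{c}$ of $d_\infty>0$ while the fluctuations are $o_\p(1)$, so $\mathrm{sgn}(\hat\mu_0^{k_0}-\hat\mu_j^k)=\mathfrak{s}(j/n,k/n)$ uniformly with probability $1-o(1)$; this converts the absolute value in $\hat T_n$ to the signed expression appearing on the right-hand side. Outside $\mathcal{E}_c$, the penalty $\sqrt{c}(d_\infty-|\mu_0^{t_0}-\mu_{j/n}^{k/n}|)$ exceeds $a\log n$ and dominates the $O(\sqrt{\log n})$ Brownian scale once $a$ is large enough, so the outer supremum is $o_\p(1)$-equivalent to the supremum over $\mathcal{E}_c$.

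For the discretization step, fix $\epsilon>0$ and take $(s,t)\in\mathcal{A}_{\epsilon,d_\infty}$ with $t-s\geq c_n/n$. Setting $j=\lceil ns\rceil$, $k=\lfloor nt\rfloor$, $c=k-j$, piecewise Lipschitz-continuity of $\mu$ from (A2) gives $|\mu_{j/n}^{k/n}-\mu_s^t|=O(1/c)$, so $|\mu_0^{t_0}-\mu_{j/n}^{k/n}|\geq d_\infty-\epsilon-O(1/c)$; replacing the fixed $\epsilon$ by a slowly decreasing sequence $\epsilon_n\leq a\log(n)/(2\sqrt{n})$ ensures $(j,k)\in\mathcal{E}_c$ uniformly. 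With $\check B(r)=n^{-1/2}B(nr)$ and $k_0=nt_0+O(1)$ one obtains $\sqrt{t-s}\,\check B(t_0)/t_0=\sqrt{c}\,B(k_0)/k_0+o(1)$ and $(\check B(t)-\check B(s))/\sqrt{t-s}=(B(k)-B(j))/\sqrt{c}+o_\p(1)$ by L\'evy's modulus of continuity; combined with the exact identity $\Gamma(t-s)=\Gamma_n(c)$, the continuous value at $(s,t)$ is bounded above by the inner expression at $(j,k)$ on the right plus $o_\p(1)$. Short scales $t-s<c_n/n$ are handled by pairing each such $(s,t)$ with the discrete pair at scale $c=c_n$ whose indices straddle it, and applying L\'evy's modulus to bound the additional discretization error by $o_\p(1)$.

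The main obstacle is the mismatch between the continuous tolerance $\epsilon$ (independent of $n$ before the outer limit) and the discrete tolerance $a\log(n)/\sqrt{c}$, which collapses to $O(\log(n)/\sqrt{n})$ as $c$ approaches $n$. This forces a diagonal argument: $\epsilon=\epsilon_n$ must tend to zero fast enough for the discretized pair to remain in $\mathcal{E}_c$ uniformly, yet slow enough that $\sup_{\mathcal{A}_{\epsilon_n,d_\infty}}$ still approximates $\lim_{\epsilon\downarrow 0}\sup_{\mathcal{A}_{\epsilon,d_\infty}}$. This is legitimate because the outer limit on the left absorbs the $n$-dependence. A secondary technical burden is uniform control of the Brownian discretization error over all scales $c\in[c_n,n-k_0]$ simultaneously, which I would handle by a chaining argument based on L\'evy's modulus of continuity.
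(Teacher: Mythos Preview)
Your rewriting step (the equality $\hat T_n = \text{discrete sup} + o_\p(1)$) is essentially the same argument the paper gives in the proof of Lemma~\ref{Lemma:Upper}, and is fine apart from a harmless mislabeling of the strong-approximation error (it is $O(n^{1/2-p}/c)$, not $O(c^{-1/2-p})$, but both give $o(1)$ after multiplying by $\sqrt c$).

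There are, however, two genuine gaps in your discretization step.

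\medskip
\textbf{(i) The rate of $\epsilon_n$.} You propose $\epsilon_n\leq a\log(n)/(2\sqrt n)$ so that the discretized pair lands in $\mathcal E_c$. But the right-hand side of the lemma \emph{carries the mean terms}: the contribution $\mathfrak s(j/n,k/n)\bigl(-\sqrt c(\mu_0^{t_0}-\mu_{j/n}^{k/n})\bigr)-\sqrt c\,d_\infty=-\sqrt c\bigl(d_\infty-|\mu_0^{t_0}-\mu_{j/n}^{k/n}|\bigr)$ is nonpositive, and at your discretized point it is only $\geq -\sqrt c\,\epsilon_n - O(c^{-1/2})$. With $\epsilon_n\asymp\log(n)/\sqrt n$ this is $-O(\log n)$, not $o(1)$, so you cannot conclude that the continuous value is bounded by the right-hand side plus $o_\p(1)$. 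The paper avoids this by taking $b_n=o(n^{-1/2})$ (so $\sqrt n\,b_n=o(1)$), working first on \emph{grid points} already in $\mathcal A_{b_n}$, and removing the mean terms there (step (II)) \emph{before} passing to the continuous supremum. There is no ``slow enough'' constraint on $\epsilon_n$: by monotonicity $\sup_{\mathcal A_{\epsilon_n}}\geq\lim_{\epsilon\downarrow 0}\sup_{\mathcal A_\epsilon}$ for \emph{every} $\epsilon_n>0$, so your stated tension is illusory.

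\medskip
\textbf{(ii) Short scales.} Your proposal to ``pair each $(s,t)$ with $t-s<c_n/n$ with a discrete pair at scale $c_n$ and bound the error by L\'evy's modulus'' does not work. For $\delta=t-s\ll c_n/n$, L\'evy's modulus gives $|\check B(t)-\check B(s)|/\sqrt\delta$ of order $\sqrt{2\log(1/\delta)}$, which is \emph{not} within $o_\p(1)$ of $(\check B(k/n)-\check B(j/n))/\sqrt{c_n/n}$; the expressions live at different magnitudes. What is actually needed is the multiscale fact that the penalty $\Gamma(t-s)$ exactly compensates this growth so that the supremum over small scales is, with high probability, \emph{dominated} by the supremum over scales bounded away from zero. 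The paper obtains this (step (III)) by invoking the argument from \citet{Duembgen2002} and \citet{Duembgen2008}: for any $\delta_1>0$ there is a $\delta_2>0$ such that, with probability $\geq 1-\delta_1$, the supremum restricted to $|s-t|\geq\delta_2$ equals the unrestricted one. Only then does one appeal to uniform continuity of $(s,t)\mapsto(\check B(t)-\check B(s))/\sqrt{t-s}$ on $\{|s-t|\geq\delta_2\}$ to pass from the grid to the continuum. Your L\'evy/chaining sketch does not supply this step.
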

\textbf{Proof of Theorem \ref{Thm:Nulldist}}

The weak convergence of the statistic  $\hat  T_n$ in \eqref{hd11} follows  directly from Lemma \ref{Lemma:Upper} -  \ref{Lemma:Lower}. Regarding the continuity of the distribution of $T_{d_\infty}$ we note that $T_{d_\infty}$ is a limit of convex functions of a Gaussian process and therefore a convex function of a Gaussian process itself. The continutiy then follows by Theorem 4.4.1 from \citep*{bogachev2015}.

The statements (1) and (2) regarding the asymptotic properties of the test statistic  $\hat T_{n,\Delta}$  are a direct consequence of \eqref{hd11} 
if  $d_\infty \leq \Delta$. 
In the case  $d_\infty>\Delta$ we note that piecewise Lipschitz continuity of $\mu$ yields that there exists a sequence of $j_n,k_n$ with $k_n-j_n \simeq n$ such that for some $\rho>0$ we have $|\mu(i/n)-\mu_0^{t_0}|\geq \Delta+\rho$ for all $j_n \leq i \leq k_n$. Consequently, using Lemma \ref{intapprox}, we obtain
    \begin{align}
        \Big|\frac{1}{k_0}\sum_{i=1}^{k_0}\mu(i/n)-\frac{1}{c}\sum_{i=j_n+1}^{k_n}\mu(i/n)\Big| \geq \Delta+\rho-O(n^{-1})
    \end{align}
    which yields $\hat T_{n,\Delta}\rightarrow \infty$ by an application of the triangle  inequality.

\subsubsection{Proof of  Lemma \ref{intapprox} -  \ref{Lemma:Lower}}
\label{sec512}

\begin{proof}[\bf Proof of Lemma \ref{intapprox}]
Let us first assume that $\mu$ has no discontinuities, then
    \begin{align}
        \bar \mu_j^k-\mu^{k/n}_{j/n}&=\frac{1}{k-j}\sum_{i=j+1}^k\Big(\mu(i/n)-n\int_{(i-1)/n}^{i/n}\mu(t)dt\Big)\\
        &=\frac{1}{k-j}\sum_{i=j+1}^kn\int_{(i-1)/n}^{i/n}\mu(t)-\mu(i/n)dt\\
        &\lesssim \frac{1}{k-j}\sum_{i=j+1}^k n^{-1}=n^{-1}
    \end{align}
The general case follows by splitting up the integrals containing the discontinuities, leading to finitely many additional terms in the sum that can be bounded only by a constant instead of $n^{-1}$.
\end{proof}

\begin{proof}[\bf  Proof of Lemma \ref{Lemma:Upper}]
By Assumption   (A1) and condition  \eqref{p1} we have
\begin{align}
   \sup_{n\geq k-j\geq c_n} \sqrt{k-j}|\hat \mu_0^{k_0}-\hat \mu_j^k|&=\sup_{|k-j|\geq c_n} \frac{|B(k)-B(j)|}{\sqrt{k-j}}+\frac{O(n^{1/2-p})}{\sqrt{k-j}}\\
   &=\sup_{|k-j|\geq c_n} \frac{|B(k)-B(j)|}{\sqrt{k-j}}+o_\P(1)~.
\end{align}
Using this and Lemma \ref{intapprox} we therefore obtain that
    \begin{align}
         \hat  T_n&= \sup_{\substack{c_n \leq c \leq n-k_0 \\ c \in \N }}\sup_{\substack{|k-j|=c \\ k>j\geq k_0}}\Big(\Big|\sqrt{c}\frac{B(k_0)}{k_0}-\frac{B(k)-B(j)}{\sqrt{c}}-\sqrt{c}\Big(\frac{1}{k_0}\sum_{i=1}^{k_0}\mu(i/n)-\frac{1}{c}\sum_{i=j+1}^k\mu(i/n)\Big) \Big| \\
         & \qquad \qquad \qquad \qquad \qquad - \Gamma_n(c)-\sqrt{c}d_\infty\Big)+o_\p(1)\\
         &= \sup_{\substack{c_n \leq c \leq n-k_0 \\ c \in \N }}\sup_{\substack{|k-j|=c \\ k>j\geq k_0}}\Big(\Big|\sqrt{c}\frac{B(k_0)}{k_0}-\frac{B(k)-B(j)}{\sqrt{c}}-\sqrt{c}\Big(\mu_0^{t_0}-\mu_{j/n}^{k/n}\Big) \Big| \\
         & \qquad \qquad \qquad \qquad \qquad - \Gamma_n(c)-\sqrt{c}d_\infty\Big)+o_\p(1)
    \end{align}
    Now note that it follows from the discusssion in  Section 2.2 of  \citep*{Frick2014} that  the random variable $M$ defined in  \eqref{p4} is finite with probability $1$,  which implies
    \begin{align}
        \sup_{\substack{c_n \leq c \leq n-k_0 \\ c \in \N }}\sup_{\substack{|k-j|=c \\ k>j\geq k_0}} \Big|\sqrt{c}\frac{B(k_0)}{k_0}-\frac{B(k)-B(j)}{\sqrt{c}} \Big|-\Gamma_n(c)=O_\p(1).
    \end{align}
  This  yields
    \begin{align}
        &\sup_{\substack{c_n \leq c \leq n-k_0 \\ c \in \N }}\sup_{(j,k) \in \mathcal{E}_c^{co}}\Big(\Big|\sqrt{c}\frac{B(k_0)}{k_0}-\frac{B(k)-B(j)}{\sqrt{c}}-\sqrt{c}\Big(\mu_0^{t_0}-\mu_{j/n}^{k/n}\Big) \Big|\\
         & \qquad \qquad \qquad \qquad \qquad - \Gamma_n(c)-\sqrt{c}d_\infty\Big)\\        
        \lesssim& - \log(n)
    \end{align}
    with high probability by the definition of the set  $\mathcal{E}^{co}_c$ in \eqref{hd12b}. Therefore, 
    \begin{align}
        &\sup_{\substack{c_n \leq c \leq n-k_0 \\ c \in \N }}\sup_{\substack{|k-j|=c \\ k>j\geq k_0}}\Big(\Big|\sqrt{c}\frac{B(k_0)}{k_0}-\frac{B(k)-B(j)}{\sqrt{c}}-\sqrt{c}\Big(\mu_0^{t_0}-\mu_{j/n}^{k/n}\Big) \Big|\Big)\\
         & \qquad \qquad \qquad \qquad \qquad - \Gamma_n(c)-\sqrt{c}d_\infty\Big)\\
        =&\sup_{\substack{c_n \leq c \leq n-k_0 \\ c \in \N }}\sup_{(j,k) \in \mathcal{E}_c}\Big(\Big|\sqrt{c}\frac{B(k_0)}{k_0}-\frac{B(k)-B(j)}{\sqrt{c}}-\sqrt{c}\Big(\mu_0^{t_0}-\mu_{j/n}^{k/n}\Big) \Big|\Big)\\
         & \qquad \qquad \qquad \qquad \qquad - \Gamma_n(c)-\sqrt{c}d_\infty\Big)+o_\p(1)\\
         =&  \sup_{\substack{c_n \leq c \leq n-k_0 \\ c \in \N }}\sup_{(j,k) \in \mathcal{E}_c}\mathfrak{s}(j/n,k/n)\Big(\sqrt{c}\frac{B(k_0)}{k_0}-\frac{B(k)-B(j)}{\sqrt{c}}-\sqrt{c}\Big(\mu_0^{t_0}-\mu_{j/n}^{k/n}\Big)\Big)\\
         & \qquad \qquad \qquad \qquad \qquad - \Gamma_n(c)-\sqrt{c}d_\infty\Big)+o_\p(1)\\
         \leq & \sup_{\substack{c_n \leq c \leq n-k_0 \\ c \in \N }}\sup_{(j,k) \in \mathcal{E}_c}\mathfrak{s}(j/n,k/n)\Big(\sqrt{c}\frac{B(k_0)}{k_0}-\frac{B(k)-B(j)}{\sqrt{c}}\Big)- \Gamma_n(c)
    \end{align}
    with high probability, which  yields the desired statement.
\end{proof}

\begin{proof}[\bf  Proof of Lemma \ref{Lemma:epslimit}]
Existence of the limit with respect to  $\epsilon$ follows because the quantity is, as a function on the probability space, pointwise monotonically non-increasing in $\epsilon$ and bounded because the random variable  $M$  is finite almost surely. The asymptotic inequality follows because $\bigcup_{c_n \leq c \leq n-k_0}\mathcal{E}_c$ is, for any $\epsilon>0$, eventually a subset of  $\mathcal{A}_{\epsilon,d_\infty}$.
\end{proof}

\begin{proof}[\bf Proof of Lemma \ref{Lemma:Lower}]
The equality has been established in the proof of Lemma \ref{Lemma:Upper} already. For the upper bound we proceed in three steps:
\begin{enumerate}
    \item[(I)]  For any sequence with $b_n=o(n^{-1/2})$ we have
    \begin{align}
        &\sup_{(j,k) \in \underset{c_n \leq c \leq n-k_0}{\bigcup}\mathcal{E}_c}\mathfrak{s}(j/n,k/n)\Big(\sqrt{k-j}\frac{B(k_0)}{k_0}-\frac{B(k)-B(j)}{\sqrt{k-j}}\\
         & \qquad \qquad \qquad \qquad \qquad -\sqrt{k-j}\Big(\mu_0^{t_0}-\mu_{j/n}^{k/n}\Big)\Big) -\Gamma_n(k-j)-\sqrt{k-j}d_\infty\Big) \\
        \geq & \sup_{ \substack{(s,t) \in A_{b_n}\cap\{k_0/n,...,1\}^2\\ |s-t|\geq c_n/n}}\mathfrak{s}(s,t)\Big(\sqrt{t-s}\frac{\check B(t_0)}{t_0}-\frac{\check B(t)-\check B(s)}{\sqrt{t-s}}-\sqrt{(t-s)n}\Big(\mu_0^{t_0}-\mu_{s}^{t}\Big)\Big)\\
         & \qquad \qquad \qquad \qquad \qquad - \Gamma(t-s)-\sqrt{(t-s)n}d_\infty\Big) 
    \end{align}
    by definition of the involved sets and of $\check B$.
    \item[(II)] By the definition of $A_{b_n}$ we then obtain
    \begin{align}
         &\sup_{ \substack{(s,t) \in A_{b_n}\cap\{k_0/n,...,1\}^2\\ |s-t|\geq c_n/n}}\mathfrak{s}(s,t)\Big(\sqrt{t-s}\frac{\check B(t_0)}{t_0}-\frac{\check B(t)-\check B(s)}{\sqrt{t-s}}-\sqrt{(t-s)n}\Big(\mu_0^{t_0}-\mu_{s}^{t}\Big)\Big)\\
         & \qquad \qquad \qquad \qquad \qquad - \Gamma(t-s)-\sqrt{(t-s)n}d_\infty\Big)  \\
         =&\sup_{ \substack{(s,t) \in A_{b_n}\cap\{k_0/n,...,1\}^2\\ |s-t|\geq c_n/n}}\Big(\mathfrak{s}(s,t)\Big(\sqrt{c}\frac{\check B(t_0)}{t_0}-\frac{\check B(t)-\check B(s)}{\sqrt{t-s}}\Big) - \Gamma(t-s)\Big) +o(1) \\
    \end{align}
    \item[(III)] Using similar arguments as in the proof of Theorem 2.1 in \citep*{Duembgen2002} (use Theorem 7.1 and Lemma 7.2 in \citep*{Duembgen2008} with  $\beta(x)=\1\{x \in [0,1]\}$ instead of Proposition 7.1 from \citep*{Duembgen2002})  we obtain 
    \begin{align}
    \label{pe:1}
        &\sup_{ \substack{(s,t) \in A_{b_n}\cap\{k_0/n,...,1\}^2\\ |s-t|\geq c_n/n}}\Big(\mathfrak{s}(j/n,k/n)\Big(\sqrt{c}\frac{\check B(k_0)}{k_0}-\frac{\check B(k)-\check B(j)}{\sqrt{c}}\Big) - \Gamma_n(c)\Big) \\
        =&\sup_{(s,t) \in A_{b_n}}\Big(\mathfrak{s}(j/n,k/n)\Big(\sqrt{c}\frac{\check B(k_0)}{k_0}-\frac{\check B(k)-\check B(j)}{\sqrt{c}}\Big) - \Gamma_n(c)\Big) +o_\p(1)
    \end{align}
\end{enumerate}
To be precise we proceed as in the proof of Theorem 2.1 in \citep*{Duembgen2002} to obtain, for any $\delta_1>0$, a set $A_1$ with probability $1-\delta_1$ on which there exists $\delta_2>0$ such that
\begin{align}
    &\sup_{ \substack{(s,t) \in A_{b_n}\cap\{k_0/n,...,1\}^2\\ |s-t|\geq c_n/n}}\Big(\mathfrak{s}(j/n,k/n)\Big(\sqrt{c}\frac{\check B(k_0)}{k_0}-\frac{\check B(k)-\check B(j)}{\sqrt{c}}\Big) - \Gamma_n(c)\Big) \\
    =&\sup_{ \substack{(s,t) \in A_{b_n}\cap\{k_0/n,...,1\}^2\\ |s-t|\geq \delta_2}}\Big(\mathfrak{s}(j/n,k/n)\Big(\sqrt{c}\frac{\check B(k_0)}{k_0}-\frac{\check B(k)-\check B(j)}{\sqrt{c}}\Big) - \Gamma_n(c)\Big) 
\end{align}
holds. Exactly the same arguments also yield a set $A_2$ with probability $1-\delta_1$ on which
\begin{align}
    &\sup_{(s,t) \in A_{b_n}}\Big(\mathfrak{s}(j/n,k/n)\Big(\sqrt{c}\frac{\check B(k_0)}{k_0}-\frac{\check B(k)-\check B(j)}{\sqrt{c}}\Big) - \Gamma_n(c)\Big)\\
    =&\sup_{\substack{(s,t) \in A_{b_n}\\|s-t|\geq \delta_2}}\Big(\mathfrak{s}(j/n,k/n)\Big(\sqrt{c}\frac{\check B(k_0)}{k_0}-\frac{\check B(k)-\check B(j)}{\sqrt{c}}\Big) - \Gamma_n(c)\Big)
\end{align}
holds. Equation \eqref{pe:1} then follows by a standard argument involving the uniform continuity of the process $(s,t)\rightarrow \frac{B(t)-B(s)}{\sqrt{t-s}}$ on the set $\{(s,t)| |s-t|\geq \delta_2\}$.\\

The Lemma then by the fact that 
$$
\sup_{(s,t) \in \mathcal{A}_{\epsilon,d_\infty}}\mathfrak{s}(s,t)\Big(\sqrt{t-s}\frac{\check B(t_0)}{t_0}-\frac{\check B(t)-\check B(s)}{\sqrt{t-s}}\Big)- \Gamma(t-s)
$$
is a decreasing function of  $\epsilon$.

\subsection{Proof of Theorem \ref{Thm:LocAlt}}

The proof follows by a straightforward modification  of Lemmas \ref{Lemma:Upper} to \ref{Lemma:Lower}. The key difference is that the quantity 
\begin{align}
    \sqrt{k-j}\Big(\Big(\mu_0^{t_0}-\mu^{k/n}_{j/n}\Big)-\Delta\Big)
\end{align}
is not upper bounded by (or in some cases converges to) $0$ anymore. Instead on uses the expansion
\begin{align}
    \Big(\mu_0^{t_0}-\mu^{k/n}_{j/n}\Big)-\Delta&=\Delta+\frac{\beta_n}{k/n-j/n}\int_{j/n}^{k/n}h(x)dx-\Delta\\
    &=\frac{\beta_n}{k/n-j/n}\int_{j/n}^{k/n}h(x)dx
\end{align}
in the last inequality in the proof of Lemma \ref{Lemma:Upper} and in step (II) of the proof of Lemma \ref{Lemma:Lower}.
\end{proof}

 \subsection{Proof of Theorem \ref{Thm:BootCons}}
We define the  quantity
\begin{align}
    \hat{\mathcal{E}}_c(a)=\Big\{ (j,k) \in \{k_0,...,n\}^2 \Big| k-j=c, \Big|\hat \mu_0^{k_0}-\hat\mu_{j}^{k}\Big|\geq \Delta-a\log(n)/\sqrt{c} \Big\}    ~.
\end{align}
and prove at the end of this section 
 the following auxiliary result.
\begin{Lemma}
\label{Lemma:Extremalsets}
    Let $\Delta=d_\infty$. For any fixed $a>0$ we have for $n$ large enough that
    \begin{align}
        \mathcal{E}_c(a-\epsilon) \subset \hat{\mathcal{E}}_c(a) \subset \mathcal{E}_c(a+\epsilon)
    \end{align}
    for all $c\geq c_n$.
\end{Lemma}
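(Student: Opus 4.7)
The plan is to reduce the comparison of $\hat{\mathcal{E}}_c(a)$ and $\mathcal{E}_c(a\pm\epsilon)$ to a uniform control of the fluctuation
$$\sqrt{c}\,\bigl|(\hat\mu_0^{k_0}-\hat\mu_j^k)-(\mu_0^{t_0}-\mu_{j/n}^{k/n})\bigr|$$
over all admissible $(j,k)$ with $k-j=c\ge c_n$. Once this fluctuation is shown to be $o_\P(\log n)$ uniformly, the two claimed inclusions follow immediately from the reverse triangle inequality together with the condition $\Delta=d_\infty$: indeed, for $(j,k)\in\mathcal{E}_c(a-\epsilon)$ we have $|\mu_0^{t_0}-\mu_{j/n}^{k/n}|\ge d_\infty-(a-\epsilon)\log(n)/\sqrt c$, which combined with an upper bound $\epsilon\log(n)/\sqrt c$ on the fluctuation yields $|\hat\mu_0^{k_0}-\hat\mu_j^k|\ge d_\infty-a\log(n)/\sqrt c=\Delta-a\log(n)/\sqrt c$, and analogously for the reverse inclusion.

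To establish the fluctuation bound I would split the error via the triangle inequality into contributions from the benchmark and from the local mean. For the benchmark, the integral approximation of Lemma \ref{intapprox} gives $|\bar\mu_0^{k_0}-\mu_0^{t_0}|=O(1/n)$, while Assumption (A1) applied at the single index $k_0\asymp n$ gives $|\hat\mu_0^{k_0}-\bar\mu_0^{k_0}|=O_\P(n^{-1/2})$, so $\sqrt c\,|\hat\mu_0^{k_0}-\mu_0^{t_0}|=O_\P(\sqrt{c/n})=O_\P(1)$, uniformly in $c\le n$. For the local mean, Lemma \ref{intapprox} gives $|\bar\mu_j^k-\mu_{j/n}^{k/n}|=O(1/c)$, so $\sqrt c$ times this is $O(1/\sqrt c)=o(1)$ uniformly for $c\ge c_n$; and using (A1) again,
$$\sqrt c\,(\hat\mu_j^k-\bar\mu_j^k)=\frac{\sigma\bigl(B(k)-B(j)\bigr)}{\sqrt c}+O\!\bigl(c^{-p}\bigr),$$
where the second term is $o(1)$ uniformly by condition \eqref{p1}.

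The remaining task is therefore to bound
$$\sup_{c_n\le c\le n-k_0}\;\sup_{\substack{k-j=c\\ k_0\le j<k\le n}}\frac{|B(k)-B(j)|}{\sqrt c}$$
by $\epsilon\log(n)$ with high probability. This is a classical Brownian oscillation supremum: after rescaling to the unit interval, it is controlled by the modulus-of-continuity bound already invoked in the proofs of Lemma \ref{Lemma:Upper}–\ref{Lemma:Lower}, giving a bound of order $\sqrt{\log n}$ uniformly. Since $\sqrt{\log n}=o(\log n)$, for any fixed $\epsilon>0$ we eventually have this supremum bounded by $\epsilon\log(n)$ with probability tending to one, yielding the desired fluctuation control. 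Combining the three pieces proves both inclusions for $n$ large enough.

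The main obstacle is the uniform-in-$(c,j,k)$ character of the Brownian oscillation bound; individual scales are trivial, but one must aggregate over the triangular index set $\{(j,k,c):k_0\le j<k\le n,\;k-j=c,\;c\ge c_n\}$ without losing more than a logarithmic factor. I would handle this either by invoking the same modulus-of-continuity machinery (Theorem 7.1 and Lemma 7.2 of \citep*{Duembgen2008}) already used in Lemma \ref{Lemma:Lower}, or by a direct chaining/peeling argument over dyadic scales $c\in[2^\ell,2^{\ell+1}]$, both of which yield the $\sqrt{\log n}$ rate that is comfortably inside the $\epsilon\log(n)$ budget.
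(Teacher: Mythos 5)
Your proof is correct and follows essentially the same route as the paper: both reduce the set inclusions to a uniform bound of the form $o(\log(n)/\sqrt{c})$ on the stochastic fluctuation, established via (A1), the Brownian oscillation bound $\sup |B(k)-B(j)|/\sqrt{k-j} = O_\P(\sqrt{\log n})$, and condition \eqref{p1} to kill the strong-approximation remainder; you are in fact slightly more explicit than the paper in separately treating the benchmark term $\hat\mu_0^{k_0}-\mu_0^{t_0}$ and the Riemann-sum bias $\bar\mu_j^k-\mu_{j/n}^{k/n}$ via Lemma \ref{intapprox}. One small slip: under (A1) as stated (which controls $\sum_{i=1}^k\epsilon_i - \sigma B(k)$, not increments), the remainder in $\sqrt c\,(\hat\mu_j^k-\bar\mu_j^k)$ is $O(n^{1/2-p}/\sqrt c)$ rather than $O(c^{-p})$; your appeal to \eqref{p1} to conclude $o(1)$ is then exactly the right step, but the intermediate exponent should be corrected.
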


\begin{proof}[\bf Proof of Theorem \ref{Thm:BootCons}]
Let us first consider the case $d_\infty=\Delta$. By the consistency of the long run variance estimate  $\hat \sigma^2$ (see Section \ref{sec5.4} for a proof) we have that 
\begin{align}
    \hat{\mathcal{E}}_c(2\sigma) \subset  \hat{\mathcal{E}}_c \subset  \hat{\mathcal{E}}_c(\sigma/2)
\end{align}
with high probability. Lemmas \ref{Lemma:Extremalsets} and \ref{Lemma:epslimit} then yield the desired statement if we can show that $\hat{\mathfrak{s}}(j/n,k/n)=\mathfrak{s}(j/n,k/n)$ holds with high probability uniformly over $k,j$ with $(j/n,k/n) \in \mathcal{A}_{\epsilon,d_\infty}$ for some $ \epsilon<\Delta$. This is an easy consequence of Lemma \ref{intapprox} and equation \eqref{p3}.

For the other cases we note that the statistic $\hat T_n^*$ is stochastically bounded because the 
random variable $M$ defined in  \eqref{p4} is almost surely finite. As $\hat T_{n,\Delta} \rightarrow -\infty $ $(\infty)$ if $d_\infty<\Delta$ $(>\Delta)$ the assertion follows.
\end{proof}

\begin{proof}[\bf Proof of Lemma \ref{Lemma:Extremalsets}]
    By Assumption   (A1) and a union bound  it follows that the inequality 
    \begin{align}
    \label{p3}
        |\bar \mu_j^{k}-\hat\mu_{j}^{k}|&=\Big|(k-j)^{-1}\sum_{i=j+1}^k\epsilon_i\Big|\\
        &\lesssim \Big|(k-j)^{-1}(B(k)-B(j))\Big|+n^{1/2-p}/(k-j)\\
        &\lesssim \frac{\sqrt{\log(n)}}{\sqrt{k-j}}+n^{1/2-p}/(k-j)
    \end{align}
    holds  uniformly with respect to  $1 \leq k-j \leq n$  with high probability.
    As a consequence   we have
    \begin{align}
    \label{p8}
        &\p\Big(|\bar \mu_j^{k}-\hat\mu_{j}^{k}|  < \epsilon\log(n)/\sqrt{k-j}, c_n \leq k-j \leq n \Big)\\
        \geq &\p\Big( \frac{\sqrt{\log(n)}}{\sqrt{k-j}}+n^{1/2-p}/(k-j) < \epsilon \log(n)/\sqrt{k-j}, c_n \leq k-j \leq n\Big)~.
    \end{align}   
    Now condition  \eqref{p1} implies uniformly with resepect to  $c_n \leq k-j \leq n$ that
    \begin{align}
       { n^{1/2-p} \over k-j} \leq \frac{n^{1/2-p}}{\sqrt{c_n}\sqrt{k-j}}\lesssim \frac{o(1)}{\sqrt{k-j}}
    \end{align}    
    which then implies that the probability in \eqref{p8} is of order $1-o(1)$. This yields the desired set inclusions as they are true whenever
    \begin{align}
        |\bar \mu_j^{k}-\hat\mu_{j}^{k}|  < \epsilon\log(n)/\sqrt{k-j} ~,~~ c_n \leq k-j \leq n~. 
    \end{align}
\end{proof}

\subsection{Consistency of the long run variance estimate}
\label{sec5.4}
\begin{Lemma}
\label{Lemma:Var}
    Under assumption   (A1) and (A2) we have
    \begin{align}
        \hat \sigma^2=\sigma^2+O_\p(n^{-1/3})~.
    \end{align}
\end{Lemma}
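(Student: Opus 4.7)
\textbf{Proof plan for Lemma \ref{Lemma:Var}.} The approach is to decompose each block difference into a deterministic ``mean'' component and a stochastic ``error'' component, control the two contributions separately, and show that the dominant error is of order $n^{-1/3}$ arising from a balance between bias (driven by finitely many jumps of $\mu$) and variance (driven by the averaging over $\lfloor n/m \rfloor$ blocks).

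First I would write, for each block index $j$,
\begin{align}
\hat\mu_{(j-1)m}^{jm}-\hat\mu_{jm}^{(j+1)m}=\Delta_j^{\mu}+\Delta_j^{\epsilon},
\end{align}
where $\Delta_j^\mu=\bar\mu_{(j-1)m}^{jm}-\bar\mu_{jm}^{(j+1)m}$ and $\Delta_j^\epsilon=\frac{1}{m}\sum_{i=(j-1)m+1}^{jm}\epsilon_i-\frac{1}{m}\sum_{i=jm+1}^{(j+1)m}\epsilon_i$, and expand the square as $(\Delta_j^\mu)^2+2\Delta_j^\mu\Delta_j^\epsilon+(\Delta_j^\epsilon)^2$. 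The Lipschitz assumption (A2) gives $|\Delta_j^\mu|=O(m/n)$ on all but finitely many (at most the number of jumps) indices $j$, on which only the crude bound $|\Delta_j^\mu|=O(1)$ is available. Summing the ``mean $\times$ mean'' piece and normalizing by $\lfloor n/m\rfloor-1$ yields a contribution of order $O(m^3/n^2)$ from regular blocks and $O(m^2/n)$ from the at most finitely many jump-crossing blocks; with $m\asymp n^{1/3}$ both terms are $O(n^{-1/3})$.

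Next I would use assumption (A1) to replace, up to an almost sure error of order $n^{1/2-p}$, the partial sums of $\epsilon_i$ by a Brownian motion with variance $\sigma^2$. After this coupling $\Delta_j^\epsilon$ becomes, modulo a deterministic error of order $n^{1/2-p}/m$, the second difference $\sigma(2B(jm)-B((j-1)m)-B((j+1)m))/m$, which is Gaussian with variance $2\sigma^2/m$ so that the appropriately scaled square has expectation $\sigma^2$. The $\{\Delta_j^\epsilon\}_j$ are a $1$-dependent sequence (adjacent blocks share an endpoint), so the variance of the average of their squares is $O(m/n)$, giving a stochastic fluctuation of order $(m/n)^{1/2}=O(n^{-1/3})$. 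The strong approximation residuals contribute, after squaring and averaging, terms of order $O(n^{1-2p}/m^2)=o(n^{-1/3})$ by \eqref{p1}, hence are negligible.

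Finally the cross term $\frac{1}{N-1}\sum_j\frac{\Delta_j^\mu\Delta_j^\epsilon}{2m}$ (or with the $m/2$ normalization, analogously) is handled via Cauchy--Schwarz from the bounds already obtained on the two diagonal pieces, producing an error of smaller order. Collecting everything yields $\hat\sigma^2=\sigma^2+O_\P(n^{-1/3})$. The only delicate step is the bookkeeping for the jumps of $\mu$: they single-handedly pin down the bias rate $m^2/n$, and it is precisely this term that forces the choice $m\asymp n^{1/3}$ to balance it against the stochastic rate $\sqrt{m/n}$, so it is the obstacle that determines both the rate and the tuning rule.
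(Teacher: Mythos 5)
Your plan follows essentially the same route as the paper's proof: couple $\sum\epsilon_i$ to a Brownian motion via (A1), split each block difference into a deterministic term $A_j=\bar\mu_{(j-1)m}^{jm}-\bar\mu_{jm}^{(j+1)m}$ and a stochastic term, bound the deterministic part, and conclude the rate $O_\P(\sqrt{m/n})=O_\P(n^{-1/3})$ from the averaging of $1$-dependent (essentially $\chi^2$) quantities built from Brownian increments. In fact your bias bookkeeping is somewhat \emph{more} careful than the paper's: you distinguish the $O(m/n)$ bound on regular blocks from the $O(1)$ bound on the finitely many jump-crossing blocks and correctly identify $m^2/n$ from the jump blocks as the binding bias term that balances against $\sqrt{m/n}$ to force $m\asymp n^{1/3}$, whereas the paper simply uses the crude bound $|A_j|\lesssim 1$ for every block and appeals to ``standard arguments.'' Your refinement is not just cosmetic --- with the block-average form of the estimator it is precisely the split between regular and jump blocks that makes the deterministic contribution controllable --- so this is a genuine improvement in exposition.

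Two small inaccuracies. First, your bound $O(n^{1-2p}/m^2)$ for the contribution of the strong-approximation residuals and your appeal to \eqref{p1} are both misplaced: condition \eqref{p1} relates $p$ to the multiscale lower bound $c_n$ and has nothing to do with the LRV block length $m$, which the paper fixes as $m\asymp n^{1/3}$; the residual term must be argued directly from $m\asymp n^{1/3}$ and (A1). Second, by Cauchy--Schwarz the dominant residual contribution comes from the \emph{cross} term (residual times Brownian part), not the squared residual, and its order with the $m/2$ normalization is $O(n^{1/2-p}/\sqrt m)$ rather than $O(n^{1-2p}/m^2)$. That said, the paper itself compresses this step into a single ``$+o_\P(n^{-1/3})$'', so this is not a gap you introduce; you should just not attribute it to \eqref{p1} and should keep track of the cross term.
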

\begin{proof}
    By assumption  (A1) we have that
    \begin{align}
        \hat \sigma^2=&\frac{1}{\lfloor n/m\rfloor -1}\sum_{i=1}^{\lfloor n/m \rfloor -1}\frac{\Big(2B(im)-B((i-1)m)-B((i+1)m)+A_i\Big)^2}{2m}      +o_\p(n^{-1/3})  
    \end{align}
    where
    \begin{align}
        A_i:=\bar \mu_{(i-1)m}^{im}-\bar \mu_{im}^{(i+1)m}~.
    \end{align}
    By Lemma \ref{intapprox} we have
    \begin{align}
        |A_i|&=\frac{n}{m}\Big|\mu_{(i-1)m/n}^{im/n}-\mu_{im/n}^{(i+1)m/n}\Big|+O(m^{-1})\\
        &\lesssim O(1)
    \end{align}
    where the inequality follows by the Lipschitz continuity of $\mu_s^t$ in $s$ and $t$. Standard arguments then yield
    \begin{align}
        \hat \sigma^2=&\frac{1}{\lfloor n/m\rfloor -1}\sum_{i=1}^{\lfloor n/m \rfloor -1}\frac{\Big(2B(im)-B((i-1)m)-B((i+1)m)\Big)^2}{2m}      +O_\p(n^{-1/3})  
    \end{align}
    which in turn yields the desired statement by noting that $Z_{im}=B(im)-B((i-1)m)$ is a triangular array of independent $\mathcal{N}(0,\sigma^2)$ variables.
\end{proof}

\subsection{Proof of Theorem \ref{thm31} and \ref{thm32}}

\textbf{Proof of Theorem \ref{thm31}}
    We only consider the case $t^*<\infty$, the case $t^*=\infty$ follows by easier and analogous arguments. We give an upper and a lower bound which establish the desired result upon combining them.
    
\textbf{Upper bound:}
We first note that it follows by Assumption  (A1), condition  \eqref{p1} and the fact that 
the random variable $M$ is 
almost surely  finite 
that 
\begin{align}
\label{p5}
    \sup_{c_n \leq |k-j|\leq n}\Big||\hat \mu_0^{k_0}-\hat \mu_j^k|-| \bar \mu_0^{k_0} - \bar \mu_j^k|\Big|= O_\P\Big(\sqrt{\frac{\log(n)}{c_n}}\Big)~.
\end{align}
We also note that the identity 
\begin{align}
    \mu(t^*)-\mu_0^{t_0}=\Delta
\end{align}
implies 
\begin{align}
    \bar \mu_{j_n}^{k_n}-\mu_0^{t_0}\geq\frac{1}{k_n-j_n}\sum_{i=j_n+1}^{k_n}(\Delta-C(t-i/n)^\kappa)=\Delta-\frac{C}{k_n-j_n}\sum_{i=j_n+1}^{k_n}(t-i/n)^\kappa
\end{align}
 for  $k_n=\lfloor nt\rfloor$ and $j_n=k_n-c_n$, 
where $C$ is some constant that only depends on $\mu$ and $c_\kappa$. We thus obtain
\begin{align}
\label{p6}
    \bar \mu_{j_n}^{k_n}-\mu_0^{t_0}\gtrsim \Delta-(c_n/n)^\kappa~, 
\end{align}
and a  similar argument is valid when $\mu(t)-\mu_0^{t_0}=-\Delta$.  Combining \eqref{p5} and \eqref{p6} therefore yields  
\begin{align}
\label{p12}
   |\hat \mu_0^{k_0}-\hat \mu_{j_n}^{k_n}|\gtrsim \Delta-(c_n/n)^\kappa-O_\P\Big(\sqrt{\frac{\log(n)}{c_n}}\Big)~, 
\end{align}
which implies that $\hat t\leq t^*$ holds with high probability.

\textbf{Lower bound:}
We give the argument for $\mu(t^*)-\mu_0^{t_0}=\Delta$, the other case follows analogously. By \eqref{smooth} we know that
\begin{align}
    \mu(t^*-x)=\Delta-c_\kappa x^\kappa+o(x^\kappa)~.
\end{align}
Thereby, choosing $x=\Big(\frac{3\sigma\log(n)}{c_\kappa\sqrt{c_n}}\Big)^{1/\kappa}$, we have
\begin{align}
    \mu(t^*-x)-\mu_0^{t_0}= \Delta-3\sigma\log(n)/\sqrt{c_n}+o\Big(\sqrt{\log(n)/c_n}\Big)
\end{align}
Consequently, using the continuity  of $\mu(t)-\mu_0^{t_0}$ and Assumption (A3) we obtain
\begin{align}
    \max_{t \in [t_0,t^*-x]}|\mu(t^*-x)-\mu_0^{t_0}|\leq \Delta-2\sigma\log(n)/\sqrt{c_n}
\end{align}
when $n$ is sufficiently large. Using similar arguments as in the derivation of the upper bound we therefore obtain
\begin{align}
\label{p7}
    \sup_{ k/n\in [t_0,t^*-x]}\sup_{j<k, k-j\geq c_n}|\hat \mu_0^{k_0}-\hat \mu_j^k|\leq \Delta-2\sigma\log(n)/\sqrt{c_n}+O_\P(\sqrt{\log(n)/c_n})~.
\end{align}
Now suppose that there exists $j_n<k_n$, satisfying $k_n-j_n\geq c_n, k_n/n \in [t_0,t^*-x]$, such that
\begin{align}
  |\hat \mu_0^{k_0}-\hat \mu_{j_n}^{k_n}|\geq \Delta-\frac{\hat \sigma \log(n)}{\sqrt{k_n-j_n}}
\end{align}
holds. By \eqref{p7} this would imply
\begin{align}
    \Delta-2\sigma\log(n)/\sqrt{c_n}+O_\P(\sqrt{\log(n)/c_n}) \geq \Delta-\frac{\hat \sigma \log(n)}{\sqrt{c_n}}
\end{align}
which happens only with probability $o(1)$ by Lemma \ref{Lemma:Var}. Consequently we have that $\hat t\geq t^*-x$ with high probability, i.e. 
\begin{align}
    \hat t\geq t^*-O_\P\Big(\frac{\log(n)}{\sqrt{c_n}}\Big)^{1/\kappa}~.
\end{align}
as desired.

\subsubsection{Proof of Theorem \ref{thm32}}
Again we only consider the case $t^*<\infty$ and note that the case $t^*=\infty$ follows by easier and analogous arguments. We give an upper and a lower bound which establish the desired result upon combination.\\
\textbf{Lower bound:}\\
Consider any $t<t^*$ and assume WLOG that $\mu(t^*)-\mu_0^{t_0}=\Delta$. Then, by \eqref{hd7}, we have
\begin{align}
    \bar \mu^k_j-\mu_0^{t_0}<\Delta-\epsilon
\end{align}
Equation \eqref{p5} then yields that $\hat t\geq t^*$ with high probability. \\
\textbf{Upper bound:}\\
Assume WLOG (last inquality of \eqref{hd7}) that for some $\delta>0$ we have for any $t^*\leq t \leq t^*+\delta$ that
\begin{align}
    \mu(t)-\mu_0^{t_0}\geq \Delta+O(t-t^*)~.
\end{align}
Consequently, by the same arguments leading to \eqref{p12}, we have
\begin{align}
   |\hat \mu_0^{k_0}-\hat \mu_{k_0}^{k_0+c_n}|\gtrsim \Delta-c_n/n-O_\P\Big(\sqrt{\frac{\log(n)}{c_n}}\Big)~, 
\end{align}
which yields
$\hat t \leq t^*+c_n/n$.

\bibliographystyle{apalike}
\setlength{\bibsep}{2pt}
\bibliography{main}

\end{document}